\newtheorem{theorem}{Theorem}[section]
\newtheorem{lemma}{Lemma}[section]
\theoremstyle{remark}
\newcommand{\myparallel}{{\mkern3mu\vphantom{\perp}\vrule depth 0pt\mkern2mu\vrule depth 0pt\mkern3mu}}
\definecolor{codegreen}{rgb}{0,0.6,0}
\definecolor{codegray}{rgb}{0.5,0.5,0.5}
\definecolor{codepurple}{rgb}{0.58,0,0.82}
\definecolor{backcolour}{rgb}{0.95,0.95,0.92}
\lstdefinestyle{mystyle}{
    backgroundcolor=\color{backcolour},   
    commentstyle=\color{codegreen},
    keywordstyle=\color{magenta},
    numberstyle=\tiny\color{codegray},
    stringstyle=\color{codepurple},
    basicstyle=\ttfamily\footnotesize,
    breakatwhitespace=true,         
    breaklines=true,                 
    captionpos=b,                    
    keepspaces=true,                 
    numbers=left,                    
    numbersep=5pt,                  
    showspaces=false,                
    showstringspaces=false,
    showtabs=false,                  
    tabsize=2
}
\title{\textbf{A Particle-in-cell Method for Plasmas with a Generalized Momentum Formulation, \\ Part II: Enforcing the Lorenz Gauge Condition} \thanks{The research of the authors was supported by AFOSR grants FA9550-19-1-0281 and  FA9550-17-1-0394, NSF grant DMS-1912183, and DOE grant DE-SC0023164.}} 
\author{Andrew J. Christlieb \thanks{Department of Computational Mathematics, Science and Engineering, Michigan State University, East Lansing, MI, 48824, United States; \href{mailto:christli@msu.edu}{christli@msu.edu}.}
\and William A. Sands \thanks{Department of Mathematical Sciences, University of Delaware, Newark, DE, 19716, United States;
\href{mailto:wsands@udel.edu}{wsands@udel.edu} (corresponding author).}
\and Stephen White \thanks{Department of Computational Mathematics, Science and Engineering, Michigan State University, East Lansing, MI, 48824, United States; \href{mailto:whites73@msu.edu}{whites73@msu.edu}.}
}
\date{}
\begin{document}

\maketitle

\begin{abstract}
     In a previous paper \cite{christlieb2024pic}, we developed a new particle-in-cell (PIC) method for the relativistic Vlasov-Maxwell system in which the electromagnetic fields and the equations of motion for the particles were cast in terms of scalar and vector potentials through a Hamiltonian formulation. This new method evolved the potentials under the Lorenz gauge using integral equation methods. New methods to construct spatial derivatives of the potentials that converge at the same rates as the fields were also presented. The new particle method was compared against standard explicit discretizations, including the well-known FDTD-PIC method, for a range of applications involving sheaths and particle beams. This paper extends this new class of methods by focusing on the enforcement the Lorenz gauge condition in both exact and approximate forms using co-located meshes. A time-consistency property of the proposed field solver for the vector potential form of Maxwell's equations is established, which is shown to preserve the equivalence between the semi-discrete Lorenz gauge condition and the analogous semi-discrete continuity equation. Using this property, we present three methods to enforce a semi-discrete gauge condition. The first method introduces an update for the continuity equation that is consistent with the discretization of the Lorenz gauge condition. Both the finite difference and spectral implementations satisfy this discrete gauge condition to machine precision.  The second approach we propose enforces a semi-discrete continuity equation using the boundary integral solution to the field equations. The potential benefit of this approach is that it eliminates spatial derivatives that appear on the particle data, namely the current density, which is often calculated by linear combinations of low-order spline basis functions. This method is ideally suited to boundary integral equation methods that invert multi-dimensional operators without dimensional splitting techniques and will be the subject of future work. The third approach introduces a gauge correcting method that makes direct use of the gauge condition to modify the scalar potential and uses local maps for both the charge and current densities. This results in a gauge error, as the maps do not enforce the continuity equation.  The vector potential coming from the current density is taken to be exact, and using the Lorenz gauge, we compute a correction to the scalar potential that makes the two potentials satisfy the gauge condition. This method also enforces the gauge condition to machine precision. We demonstrate two of the proposed methods in the context of periodic domains. Problems defined on bounded domains, including those with complex geometric features remain an ongoing effort. However, this work shows that it is possible to design computationally efficient methods that can effectively enforce the Lorenz gauge condition in an non-staggered PIC formulation.
\end{abstract}

\noindent
{ \footnotesize{\textbf{Keywords}: Vlasov-Maxwell system, generalized momentum, particle-in-cell, method-of-lines-transpose, integral solution, gauge condition} }

%
%
%
%
\section{Introduction}

Recently, in \cite{christlieb2024pic}, we developed a new particle-in-cell (PIC) method for the Vlasov-Maxwell (VM) system in which Maxwell's equations were cast as wave equations under the Lorenz gauge and solved using an integral equation method. New methods for the spatial derivatives were also obtained by computing analytical derivatives of the resulting integral solutions. The proposed field solver, as well as the methods used to compute the derivatives, converge at a rate of fifth-order accuracy in space and first-order accuracy in time. These methods were coupled to a particle discretization of the Vlasov equation in its Hamiltonian form, which was evolved using a Taylor-corrected semi-implicit Euler method. The new approach demonstrated a number of improvements over conventional explicit PIC methods, including mesh-independent heating and improved stability and refinement, even in cases where the Debye length $\lambda_{D}$ would normally be considered under-resolved (i.e., $\lambda_{D}/\Delta x \geq 1$). These characteristics seem to imply that the new method requires less resolution than traditional explicit PIC in terms of macroparticles and mesh spacing for a prescribed accuracy. Though not explored in this paper, the new method also retains the geometric flexibility provided by the field solver \cite{causley2017wave-propagation,MOLT-EB-2020}, and will be important in forthcoming research as we consider more realistic devices with non-trivial geometries. In electromagnetic experiments, the error in the Lorenz gauge condition was carefully monitored, given that no method was used to enforce the gauge condition. We found these errors to be acceptable even in problems known to be sensitive to issues of charge conservation. 
However, we anticipate situations where non-linear dynamics of the fields interacting with a non-thermalized plasma will lead to settings where this is not the case, and one will need to enforce the gauge condition.

This paper extends the method introduced in part I by developing different approaches to enforce the Lorenz gauge condition in the context of co-located meshes. We start by establishing a certain time-consistency theorem, which shows, at the semi-discrete level, that enforcing the Lorenz gauge is equivalent to satisfying the continuity equation. Using the time-consistency result from this paper, we first develop a map for the charge density that ensures we exactly satisfy a semi-discrete analogue of the Lorenz gauge condition on a non-staggered grid. This method utilizes bilinear map of current density on the mesh, which is then coupled to a point-wise solution to the continuity equation to compute the charge density. In this time-consistent approach, the spatial derivatives are computed with high-order finite differences (e.g., sixth-order accuracy) or with a spectral method. A Lagrange multiplier is introduced to enforce charge conservation globally. When this charge density is used in the update of the scalar potential, both with and without the Lagrange multiplier, the method effectively satisfies the Lorenz gauge condition to machine precision. The second approach we propose is an exact map that enforces a semi-discrete continuity equation through a source term in the integral formulation of the field solver. A notable advantage offered by the integral solution with this particular map is that it allows for the removal of spatial derivatives that act on the particle current density. This quantity may be ``rough" if too few simulation particles are used, as it is computed from linear combinations of low-order spline basis functions. The action of the spatial derivatives further amplifies this. Instead, we propose a new map that eliminates the spatial derivatives of the particle data by exploiting the structure of the integral solution for the scalar potential. The method is ideal for integral solutions that make use of the multi-dimensional Green's function as in \cite{cheng2017asymptotic}.  The third method we consider is a gauge correcting method in which bilinear mappings are used for both charge and current density on the mesh. Since the resulting fields will not satisfy the gauge condition, we treat the vector potential solution as exact and use the Lorenz gauge to compute a correction to the scalar potential. The vector potential and corrected scalar potential are then guaranteed to satisfy the gauge condition to machine precision. 
 
 One of the major goals of this effort is to design fast algorithms that can address the challenges posed by the geometry of experimental devices. For this reason, we focus on simple mappings that are based on bilinear maps (or area weightings) which are more convenient in such circumstances and offer a good balance between accuracy and efficiency. Bilinear maps produce smoother representations of the particle data on the mesh compared to other maps such as piece-wise constant maps or the current weighting scheme of Villasenor and Buneman \cite{VillasenorChargeConservation92}. For the gauge enforcing methods, the exact charge density mapping proposed in this work uses bilinear maps only for the current densities, and the continuity equation is used to identify the corresponding charge density. The gauge correcting approach, on the other hand, uses bilinear maps for both the charge and current densities, but the gauge error is controlled through an equation that corrects the scalar potential.

This structure of this paper is as follows. In section \ref{sec:2 formulation}, we give a short introduction to the formulation considered in this paper, which was used in our previous paper \cite{christlieb2024pic}. We also establish some important properties of the semi-discrete formulation used for the potentials, including a time-consistency property of the field solver, which largely motivates the methods introduced in subsequent sections. We then introduce three techniques that enforce a semi-discrete gauge condition in section \ref{sec:3 exact charge maps}. Some numerical results are presented in section \ref{sec:5 Numerical results} to test the methods in periodic domains. In particular, we consider the relativistic Weibel instability and present a new test problem that is designed to amplify the errors in the gauge condition. A brief summary of the paper is presented in section \ref{sec:6 Conclusion}.

%
%
%
%
\section{Formulation and Background}
\label{sec:2 formulation}

This section contains a description of the problem formulation for the Vlasov-Maxwell system that is considered in this work, which is introduced in \ref{subsec:2 VM-system}. A Hamiltonian formulation for this problem is presented in section \ref{subsec:2 model}, which uses the Lorenz gauge to cast the system in terms of scalar and vector potentials. Under the Lorenz gauge, these equations take the form of scalar wave equations, which are evolved using a backwards difference discretization in time. For purposes of brevity, we provide only the essential details concerning the formulation. The full details concerning the formulation and the dimensionally-split algorithms can be found in our previous paper \cite{christlieb2024pic}. At the end of the section, we conclude with a brief summary of the important elements of the formulation.

%
%
\subsection{Relativistic Vlasov-Maxwell System}
\label{subsec:2 VM-system}

In this work, we develop numerical algorithms for plasmas described by the relativistic Vlasov-Maxwell (VM) system, which in SI units, reads as
\begin{empheq}[left=\empheqlbrace]{align}
    &\partial_t f_{s} + \frac{\mathbf{p}}{m_{s}\gamma_{s}} \cdot \nabla_{x} f_{s} + q_{s} \left( \mathbf{E} + \frac{\mathbf{p} \times \mathbf{B}}{m_{s}\gamma_{s}} \right)\cdot \nabla_{p} f_{s} = 0, \label{eq:Vlasov species}\\
    &\nabla \times \mathbf{E} = -\partial_t \mathbf{B}, \label{eq:Farady} \\
    &\nabla \times \mathbf{B} = \mu_{0}\left( \mathbf{J} + \epsilon_{0} \partial_t \mathbf{E} \right), \label{eq:Ampere} \\
    &\nabla \cdot \mathbf{E} = \frac{\rho}{\epsilon_0}, \label{eq:Gauss-E} \\
    &\nabla \cdot \mathbf{B} = 0. \label{eq:Gauss-B}
\end{empheq}
The first equation \eqref{eq:Vlasov species} is the relativistic Vlasov equation which describes the evolution of a probability distribution function $f_{s}\left( \mathbf{x}, \mathbf{p}, t\right)$ for particles of species $s$ in phase space which have mass $m_{s}$ and charge $q_{s}$. Here, we define $\gamma_{s} = 1/\sqrt{1 + \mathbf{p}^2/(m_{s} c)^2} =  1/\sqrt{1 - \mathbf{v}^2/c^2}$, which makes equation \eqref{eq:Vlasov species} Lorentz invariant. Physically, equation \eqref{eq:Vlasov species} describes the time evolution of a distribution function that represents the probability of finding a particle of species $s$ at the position $\mathbf{x}$, with linear momentum $\mathbf{p} = m_{s} \gamma_{s} \mathbf{v}$, at any given time $t$. Since the position and velocity data are vectors with 3 components, the distribution function is a scalar function of 6 dimensions plus time. While the equation itself has fairly simple structure, the primary challenge in numerically solving this equation is its high dimensionality. This growth in the dimensionality has posed tremendous difficulties for grid-based discretization methods, where one often needs to use many grid points to resolve relevant space and time scales in the problem. This difficulty is compounded by the fact that many plasmas of interest contain multiple species. Despite the lack of a collision operator on the right-hand side of \eqref{eq:Vlasov species}, collisions occur in a mean-field sense through the electric and magnetic fields, which appear as coefficients of the gradient in momentum.

Equations \eqref{eq:Farady} - \eqref{eq:Gauss-B} are Maxwell's equations, which describe the evolution of the background electric and magnetic fields. Since the plasma is a collection of moving charges, any changes in the distribution function for each species will be reflected in the charge density $\rho(\mathbf{x},t)$, as well as the current density $\mathbf{J}(\mathbf{x},t)$, which, respectively, are the source terms for Gauss' law \eqref{eq:Gauss-E} and Amp\`ere's law \eqref{eq:Ampere}. For $N_{s}$ species, the total charge density and current density are defined by summing over the species
\begin{equation}
    \rho(\mathbf{x},t) = \sum_{s=1}^{N_{s}} \rho_{s}(\mathbf{x},t), \quad \mathbf{J}(\mathbf{x},t) = \sum_{s=1}^{N_{s}} \mathbf{J}_{s}(\mathbf{x},t), \label{eq:total charge + current densities}
\end{equation}
where the species charge and current densities are defined through moments of the distribution function $f_s$:
\begin{equation}
    \rho_{s}(\mathbf{x},t) = q_{s} \int_{\Omega_{p}} f_{s}(\mathbf{x}, \mathbf{p}, t) \, d\mathbf{p}, \quad \mathbf{J}_{s}(\mathbf{x},t) = q_{s} \int_{\Omega_{p}} \frac{\mathbf{p}}{m_{s} \gamma_{s}} f_{s}(\mathbf{x}, \mathbf{p}, t) \, d\mathbf{p}. \label{eq:species charge + current densities integrals}
\end{equation}
Here, the integrals are taken over the momentum components of phase space, which we have denoted by $\Omega_{p}$. The values $\rho$ and $\textbf{J}$ are physically coupled by the continuity equation
\begin{equation}\label{eq:continuity}
    \partial_t \rho + \nabla \cdot \textbf{J} = 0,
\end{equation}
which enforces the conservation of charge.

The remaining parameters $\epsilon_{0}$ and $\mu_{0}$ describe the permittivity and permeability of the media in which the fields propagate, which we take to be free-space. In free-space, Maxwell's equations move at the speed of light $c$, and we have the useful relation $c^{2} = 1/\left( \mu_0 \epsilon_0 \right)$. The last two equations, \eqref{eq:Gauss-E} and \eqref{eq:Gauss-B}, are known as the involutions.  The involutions are passive equations, which are not directly part of the update.  If initial conditions satisfy the involutions, then solutions to \eqref{eq:Farady} and \eqref{eq:Ampere} satisfy  \eqref{eq:Gauss-E} and \eqref{eq:Gauss-B} for all future times.  It is imperative that numerical schemes for Maxwell's equations satisfy these involutions to ensure charge conservation in the sense of equation \eqref{eq:continuity} and prevents the appearance of so-called ``magnetic monopoles". This requirement is one of the reasons we adopt a gauge formulation for Maxwell's equations, which is presented in the next section.

%
%
%
%
\subsection{A Hamiltonian Formulation of the Vlasov-Maxwell System under the Lorenz Gauge}
\label{subsec:2 model}

The particle method considered in the companion paper \cite{christlieb2024pic} is a Hamiltonian formulation of the Vlasov-Maxwell system in which Maxwell's equations were written in terms of the scalar and vector potentials. The adoption of a Hamiltonian formulation for the particles was motivated by the desire to avoid computing time derivatives of the vector potential. Using the Hamiltonian for a single relativistic particle in a potential field
\begin{equation}
    \label{eq:Relativistic Hamiltonian particle potential field}
    \mathcal{H} = \sqrt{c^2 \left( \mathbf{P} - q \mathbf{A}\right)^2 + \left(mc^2\right)^2} + q \phi,
\end{equation}
we obtained the system
\begin{empheq}[left=\empheqlbrace]{align}
    &\frac{d\mathbf{x}_{i}}{dt} = \frac{c^2 \left(\mathbf{P}_{i} - q_{i}\mathbf{A}\right)}{\sqrt{c^2\left(\mathbf{P}_{i} - q_{i} \mathbf{A}\right)^2 + \left(m_{i} c^2\right)^2}}, \label{eq:Position equation relativistic form} \\
    &\frac{d\mathbf{P}_{i}}{dt} = -q_{i} \nabla \phi + \frac{q_{i} c^2 \left( \nabla\mathbf{A}\right) \cdot \left(\mathbf{P}_{i} - q_{i}\mathbf{A}\right)}{\sqrt{c^2\left(\mathbf{P}_{i} - q_{i} \mathbf{A}\right)^2 + \left(m_{i} c^2\right)^2}}, \label{eq:Generalized momentum equation relativistic form} \\
    &\frac{1}{c^2} \partial_{tt} \phi -\Delta \phi =  \frac{1}{\epsilon_{0}} \rho, \label{eq:scalar potential eqn lorenz} \\ 
    &\frac{1}{c^2} \partial_{tt} \mathbf{A} -\Delta \mathbf{A}= \mu_{0} \mathbf{J}, \label{eq:vector potential eqn lorenz} \\ 
    &\frac{1}{c^2} \partial_{t} \phi + \nabla \cdot \mathbf{A} =0 \label{eq:Lorenz gauge condition},
\end{empheq}
where $i = 1, 2, \cdots, N_{p}$ is the index of the simulation particle, $c$ is the speed of light, $\epsilon_{0}$ and $\mu_{0}$ represent, respectively, the permittivity and permeability of free-space. Further, we have used $\phi$ to denote the scalar potential and $\mathbf{A}$ is the vector potential, which are related to the electric and magnetic fields via
\begin{equation}
    \label{eq:Convert potentials to EB}
    \mathbf{E} = -\nabla \phi - \partial_{t} \mathbf{A}, \quad \mathbf{B} = \nabla \times \mathbf{A}.
\end{equation}
Though not written explicitly, the potentials and gradients that appear in equations \eqref{eq:Position equation relativistic form} and \eqref{eq:Generalized momentum equation relativistic form} for particle $i$ are evaluated at the corresponding position given by $\mathbf{x}_{i}$. The evolution equations for the potentials \eqref{eq:scalar potential eqn lorenz} and \eqref{eq:vector potential eqn lorenz} are a system of four scalar wave equations that are coupled to the Lorenz gauge condition \eqref{eq:Lorenz gauge condition}. The wave equations for the potentials $\phi$ and $\mathbf{A}$ are equivalent to Maxwell's equations provided the gauge condition is satisfied. While other gauge conditions can be used, we choose to work under the Lorenz gauge simply because it allows for a relatively straightforward application of our previous work developed for scalar wave equations \cite{causley2017wave-propagation,causley2014higher}. However, it may be advantageous to consider other gauge conditions, such as the Coulomb gauge.

In the Hamiltonian formulation presented above, the velocity of particle $i$ is obtained from the generalized momentum using the definition
\begin{equation}
    \mathbf{v}_{i} = \frac{c^2 \left( \mathbf{P}_{i} - q_{i} \mathbf{A}\right)}{\sqrt{ c^2\left( \mathbf{P}_{i} - q_{i} \mathbf{A}\right)^2 + \left(m_{i}c_{i}^{2}\right)^2}}, \quad i = 1, 2, \cdots, N_{p}, \label{eq:velocity from generalized momentum}
\end{equation}
where the potential $\mathbf{A}$ is to be evaluated at the position of particle $i$. Finally, the charge density $\rho$ and current density $\mathbf{J}$ in equations \eqref{eq:scalar potential eqn lorenz} and \eqref{eq:vector potential eqn lorenz} are computed according to
\begin{align}
    \rho \left( \mathbf{x}, t \right) &= \sum_{p=1}^{N_{p}} q_{p} S \left(\mathbf{x} - \mathbf{x}_{p}(t) \right), \label{eq:PIC charge density} \\
    \mathbf{J} \left( \mathbf{x}, t \right) &= \sum_{p=1}^{N_{p}} q_{p} \mathbf{v}_{p}(t) S \left(\mathbf{x} - \mathbf{x}_{p}(t) \right). \label{eq:PIC current density}
\end{align}
Here, $S$ is the particle shape function, which, in addition to gathering the particle data to the mesh, is responsible for interpolating fields on the mesh to the particles. The use of a shape function can be considered as a spatial regularization of the Klimontovich function
\begin{equation*}
    f(\mathbf{x}, \mathbf{v},t) \approx \sum_{p = 1}^{N_{p}} S \left(\mathbf{x} - \mathbf{x}_{p}(t)\right)\delta \left(\mathbf{v} - \mathbf{v}_{p}(t)\right).
\end{equation*}
The definitions \eqref{eq:PIC charge density} and \eqref{eq:PIC current density} follow directly from the evaluation of the moments \eqref{eq:species charge + current densities integrals}, where the reference to the species is replaced with the more generic label $p$ that can be attributed to a particle of any species. In this work, the scatter and gather steps for the current density \eqref{eq:PIC current density} use bilinear maps obtained from tensor products of piece-wise linear spline basis functions. The charge density \eqref{eq:PIC charge density}, on the other hand, will be evaluated using one of two potential options: (1) a map that enforces a semi-discrete continuity equation or (2) bilinear maps obtained from tensor products of the piece-wise linear spline basis function. The latter option is known to violate the continuity equation, but is the basis of the gauge correcting technique discussed in section \ref{subsec:3-corrected_phi}. The choice of bilinear maps provides a good compromise between accuracy and efficiency and allows the method to work on bounded domains, in which the plasma interacts with the boundary.

%
%
%
%
\subsection{A Semi-discrete Method for the Lorenz Gauge Formulation}

We consider the same semi-discrete treatment of the Lorenz gauge formulation as in our previous work \cite{christlieb2024pic}, which we briefly discuss. We adopted the following first-order semi-discrete Backward Difference Formula (BDF) for the scalar wave equation
\begin{equation}
    \label{eq:BDF-1 semi-discrete equation}
    \left( \mathcal{I} - \frac{1}{\alpha^2} \Delta \right) u^{n+1} = \left( 2 u^{n} - u^{n-1} \right) + \frac{1}{\alpha^2} S^{n+1}(\mathbf{x}) + \mathcal{O}\left(\frac{1}{\alpha^3}\right), \quad \alpha := \frac{1}{c \Delta t}.
\end{equation}
Additionally, we approximated $\partial_{t}$ in the potential formulation of Maxwell's equations using a first-order BDF approach, which led to the semi-discrete system
\begin{empheq}[left=\empheqlbrace]{align}
    &\mathcal{L} \left[\phi^{n+1}\right] = 2 \phi^{n} - \phi^{n-1} + \frac{1}{\alpha^2 \epsilon_0} \rho^{n+1}, \label{eq:BDF-1 semi-discrete phi} \\
    &\mathcal{L} \left[\mathbf{A}^{n+1} \right] = 2 \mathbf{A}^{n} - \mathbf{A}^{n-1} + \frac{\mu_0}{\alpha^2} \mathbf{J}^{n+1}, \label{eq:BDF-1 semi-discrete A} \\
    &\frac{\phi^{n+1} - \phi^{n}}{c^{2} \Delta t} + \nabla \cdot \mathbf{A}^{n+1} = 0, \label{eq:semi-discrete Lorenz}
\end{empheq}
where $\mathcal{L}$ is the modified Helmholtz operator that satisfies
\begin{equation}
    \label{eq:modified helmholtz eqn}
    \mathcal{L} \left[ u \right] := \left( \mathcal{I} - \frac{1}{\alpha^2} \Delta \right) u(\mathbf{x}) = S(\mathbf{x}), \quad \mathbf{x} \in \Omega.
\end{equation}
The operator $\mathcal{L}^{-1}$ is formally defined as a boundary integral equation involving a Green's function, namely
\begin{equation}
    \label{eq:u layer potential solution}
    u(\mathbf{x}) = \int_{\Omega} G(\mathbf{x},\mathbf{y}) S(\mathbf{y}) \, dV_{\mathbf{y}} + \int_{\partial \Omega} \left(  \sigma(\mathbf{y}) G(\mathbf{x},\mathbf{y}) + \gamma(\mathbf{y})\frac{\partial G}{\partial \mathbf{n}} \right) \, dS_{\mathbf{y}}.
\end{equation}
where $\sigma(\mathbf{y})$ is the single-layer potential and $\gamma(\mathbf{y})$ is the double-layer potential that are used to enforce boundary conditions on $u(\mathbf{x})$ \cite{FollandBook1995}. Here, we use $G(\mathbf{x},\mathbf{y})$ to denote the Green's function for the operator $\mathcal{L}$. This has several useful properties that will be established in subsection \ref{subsec:BDF-Properties}.

\subsection{The Improved Asymmetrical Euler Method and System Advance}
\label{subsec:IAEM}

The equations of motion for the particles are discretized in time using the improved asymmetric Euler method (IAEM), which was introduced in our previous paper \cite{christlieb2024pic}. It is based on a simple modification of a semi-implicit Euler method used in \cite{Gibbon2017Hamiltonian} that includes an additional term from a Taylor expansion in the momentum advance. A single time step of the IAEM method consists of the following steps:
\begin{empheq}[left=\empheqlbrace]{align}
    \mathbf{x}_{i}^{n+1} &= \mathbf{x}_{i}^{n} + \mathbf{v}_{i}^{n} \Delta t, \label{eq:IAEM x update} \\
    \mathbf{v}_{i}^{*} &= 2\mathbf{v}_{i}^{n} - \mathbf{v}_{i}^{n-1}, \label{eq:IAEM v extrapolation}\\
    \mathbf{P}_{i}^{n+1} &= \mathbf{P}_{i}^{n} + q_i \Bigg( - \nabla \phi^{n+1} + \nabla \mathbf{A}^{n+1} \cdot \mathbf{v}_{i}^{*} \Bigg)\Delta t, \label{eq:IAEM P update} \\
    \mathbf{v}_{i}^{n+1} &\equiv \frac{c^2 \left( \mathbf{P}_{i}^{n+1} - q_{i} \mathbf{A}^{n+1}\right)}{\sqrt{ c^2\left( \mathbf{P}_{i}^{n+1} - q_{i} \mathbf{A}^{n+1}\right)^2 + \left(m_{i}c^2\right)^2}}. \label{eq:IAEM v defn}
\end{empheq}
This method, which is first-order accurate in time, begins by calculating the new particle positions using an explicit Euler advance \eqref{eq:IAEM x update}. We then obtain the charge density $\rho^{n+1}$ and an approximate current density $\tilde{\mathbf{J}}^{n+1}$ which are used to evolve the fields under the BDF discretization. The velocity is linearly extrapolated to time $t^{n+1}$ following equation \eqref{eq:IAEM v extrapolation}. Using updated fields, we compute the particle momenta according to equation \eqref{eq:IAEM P update}. Finally, the velocity at time $t^{n+1}$ is then calculated from the definition \eqref{eq:IAEM v defn}. In a comparison with the semi-implicit Euler method of \cite{Gibbon2010}, the IAEM displayed a notable reduction in the error constant and was determined to be more effective at controlling the growth in the energy.

%
%
\subsection{Properties of the BDF Wave Solver}\label{subsec:BDF-Properties}

In this section, we establish some structure preservation properties satisfied by the semi-discrete potential formulation. This includes a lemma and two theorems which provide a connection relating the semi-discrete Lorenz gauge condition \eqref{eq:semi-discrete Lorenz} to the semi-discrete continuity equation
\begin{equation}
    \label{eq:semi-discrete continuity}
    \frac{\rho^{n+1} - \rho^{n}}{\Delta t} + \nabla \cdot \textbf{J}^{n+1} = 0.
\end{equation}
We also discuss the connection to the satisfaction of Gauss' Law \eqref{eq:Gauss-E} in the semi-discrete sense.

\subsubsection{Time-consistency of the Semi-discrete Lorenz Gauge Formulation}
\label{subsubsec:time consistency for BDF-1}

We will now prove a theorem that establishes a certain time consistency property of the Lorenz gauge formulation of Maxwell's equations \eqref{eq:scalar potential eqn lorenz}-\eqref{eq:Lorenz gauge condition} when the semi-discrete BDF discretization \eqref{eq:BDF-1 semi-discrete equation} is applied to the potentials. By time-consistent, we mean that the semi-discrete system \eqref{eq:BDF-1 semi-discrete phi}-\eqref{eq:semi-discrete Lorenz} for the potentials induces an equivalence between the gauge condition and continuity equation at the semi-discrete level. In the treatment of the semi-discrete equations, we shall ignore effects of dimensional splittings and instead consider the more general inverse induced by the integral solution \eqref{eq:u layer potential solution}. Note that space is treated in a continuous manner, which means that effects due to the representation of particles on the mesh are ignored.

We show that this semi-discrete system is time-consistent in the sense of the semi-discrete Lorenz gauge. To simplify the presentation, we first prove the following lemma that connects the discrete gauge condition \eqref{eq:semi-discrete Lorenz} to the semi-discrete equations for the potentials given by equations \eqref{eq:BDF-1 semi-discrete phi} and \eqref{eq:BDF-1 semi-discrete A}.
\begin{lemma}
The semi-discrete Lorenz gauge condition \eqref{eq:semi-discrete Lorenz} satisfies the recurrence relation
\begin{equation}
    \label{eq:semi-discrete Lorenz residual update}
    \epsilon^{n+1} = \mathcal{L}^{-1} \Bigg( 2 \epsilon^{n} - \epsilon^{n-1} + \frac{\mu_0}{\alpha^2} \left( \frac{\rho^{n+1} - \rho^{n}}{\Delta t} + \nabla \cdot \mathbf{J}^{n+1} \right) \Bigg),
\end{equation}
where we have defined the semi-discrete residual 
\begin{equation}
    \label{eq:semi-discrete Lorenz residual}
    \epsilon^{n} = \frac{\phi^{n} - \phi^{n-1}}{c^{2} \Delta t} + \nabla \cdot \mathbf{A}^{n}.
\end{equation}
\label{lemma:residual equivalence}
\end{lemma}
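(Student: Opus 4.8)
The plan is to derive the recurrence for $\epsilon^{n+1}$ directly by substituting the definition \eqref{eq:semi-discrete Lorenz residual} into the semi-discrete update equations \eqref{eq:BDF-1 semi-discrete phi} and \eqref{eq:BDF-1 semi-discrete A}. First I would apply the operator $\mathcal{L}$ to $\epsilon^{n+1}$, using the fact that $\mathcal{L}$ acts linearly and (since space is treated continuously) commutes with both $\partial_t$-type difference quotients and with $\nabla\cdot$. Writing $\mathcal{L}[\epsilon^{n+1}] = \frac{1}{c^2\Delta t}\big(\mathcal{L}[\phi^{n+1}] - \mathcal{L}[\phi^{n}]\big) + \nabla\cdot\mathcal{L}[\mathbf{A}^{n+1}]$, I would then substitute the right-hand sides from \eqref{eq:BDF-1 semi-discrete phi} and \eqref{eq:BDF-1 semi-discrete A}. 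A subtlety here is that \eqref{eq:BDF-1 semi-discrete phi} gives $\mathcal{L}[\phi^{n+1}]$ but I also need $\mathcal{L}[\phi^{n}]$; this is just the same relation with indices shifted down by one, valid for $n \geq 1$, so the argument is a statement about the recurrence holding for $n$ large enough that $\phi^{n-1}, \phi^{n}, \mathbf{A}^{n-1}, \mathbf{A}^{n}$ are all defined.

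After substitution, the scalar-potential contribution becomes $\frac{1}{c^2\Delta t}\big[(2\phi^n - \phi^{n-1}) - (2\phi^{n-1} - \phi^{n-2}) + \frac{1}{\alpha^2\epsilon_0}(\rho^{n+1} - \rho^n)\big]$ and the vector-potential contribution becomes $\nabla\cdot\big[(2\mathbf{A}^n - \mathbf{A}^{n-1}) + \frac{\mu_0}{\alpha^2}\mathbf{J}^{n+1}\big]$. The key algebraic step is to regroup these terms into combinations of $\epsilon^n$ and $\epsilon^{n-1}$: the telescoping/linear structure of the BDF stencil means $\frac{1}{c^2\Delta t}\big[(2\phi^n - \phi^{n-1}) - (2\phi^{n-1} - \phi^{n-2})\big] = 2\cdot\frac{\phi^n - \phi^{n-1}}{c^2\Delta t} - \frac{\phi^{n-1} - \phi^{n-2}}{c^2\Delta t}$, and similarly $\nabla\cdot(2\mathbf{A}^n - \mathbf{A}^{n-1}) = 2\nabla\cdot\mathbf{A}^n - \nabla\cdot\mathbf{A}^{n-1}$. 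Adding these pairwise reconstitutes exactly $2\epsilon^n - \epsilon^{n-1}$. The only leftover terms are $\frac{1}{c^2\Delta t}\cdot\frac{1}{\alpha^2\epsilon_0}(\rho^{n+1}-\rho^n) + \frac{\mu_0}{\alpha^2}\nabla\cdot\mathbf{J}^{n+1}$, and using $c^2 = 1/(\mu_0\epsilon_0)$ one sees $\frac{1}{c^2\Delta t\,\alpha^2\epsilon_0} = \frac{\mu_0}{\alpha^2\Delta t}$, so this equals $\frac{\mu_0}{\alpha^2}\big(\frac{\rho^{n+1}-\rho^n}{\Delta t} + \nabla\cdot\mathbf{J}^{n+1}\big)$. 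Applying $\mathcal{L}^{-1}$ to both sides then yields \eqref{eq:semi-discrete Lorenz residual update}.

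I expect the main obstacle to be bookkeeping rather than anything deep: one must be careful that $\mathcal{L}^{-1}$ is well-defined on the residual (which it is, given the assumptions in the excerpt that we use the integral-solution inverse \eqref{eq:u layer potential solution} and ignore dimensional splitting), and that the boundary-condition layer potentials in \eqref{eq:u layer potential solution} do not obstruct the commutation of $\mathcal{L}^{-1}$ with $\partial_t$-differences and $\nabla\cdot$ — in a periodic domain the layer-potential terms vanish, so this is clean, and the lemma is stated at the level where space is continuous. The constant-chasing with $\alpha = 1/(c\Delta t)$ and $c^2 = 1/(\mu_0\epsilon_0)$ is the one place a sign or factor could slip, so I would verify that $\frac{1}{\alpha^2} = c^2\Delta t^2$ and hence $\frac{1}{c^2\Delta t}\cdot\frac{1}{\alpha^2} = \Delta t$, giving the clean $\frac{\mu_0}{\alpha^2}\cdot\frac{1}{\Delta t}$ prefactor on the source discrepancy. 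Everything else is linearity of $\mathcal{L}$, $\nabla\cdot$, and the difference operators.
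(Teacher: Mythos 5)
Your proposal is correct and follows essentially the same route as the paper's proof: index-shifting the scalar-potential update, taking the divergence of the vector-potential update, using linearity and $c^2 = 1/(\mu_0\epsilon_0)$, and regrouping the BDF stencil into $2\epsilon^{n} - \epsilon^{n-1}$ plus the continuity-equation source. The only cosmetic difference is that you apply $\mathcal{L}$ to $\epsilon^{n+1}$ and invert at the end, whereas the paper inverts each update equation first and then assembles the residual; the algebra and constants are identical.
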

\begin{proof}

Again, we note that the modified Helmholtz operator $\mathcal{L}$ can be formally ``inverted" with the integral solution given by equation \eqref{eq:u layer potential solution}. From this, we can calculate the terms involving the potentials in the residual \eqref{eq:semi-discrete Lorenz residual}. Proceeding, the equation for the scalar potential is found to be
\begin{equation}
    \phi^{n+1} = \mathcal{L}^{-1} \Bigg( 2 \phi^{n} - \phi^{n-1} + \frac{1}{\alpha^2 \epsilon_0} \rho^{n+1} \Bigg), \label{eq:BDF-1 semi-discrete phi update}
\end{equation}
which can be evaluated at time level $n$ to yield
\begin{equation}
    \label{eq:BDF-1 semi-discrete phi update at t_n}
    \phi^{n} = \mathcal{L}^{-1} \Bigg( 2 \phi^{n-1} - \phi^{n-2} + \frac{1}{\alpha^2 \epsilon_0} \rho^{n} \Bigg).
\end{equation}
Next, we take the divergence of $\mathbf{A}$ in equation \eqref{eq:BDF-1 semi-discrete A} and find that
\begin{equation*}
    \mathcal{L} \left( \nabla \cdot \mathbf{A}^{n+1} \right) = 2 \nabla \cdot \mathbf{A}^{n} - \nabla \cdot \mathbf{A}^{n-1} + \frac{\mu_0}{\alpha^2} \nabla \cdot \mathbf{J}^{n+1}.
\end{equation*}
Formally inverting the operator $\mathcal{L}$, we obtain the relation
\begin{equation}
    \nabla \cdot \mathbf{A}^{n+1} = \mathcal{L}^{-1} \left( 2 \nabla \cdot \mathbf{A}^{n} - \nabla \cdot \mathbf{A}^{n-1} + \frac{\mu_0}{\alpha^2} \nabla \cdot \mathbf{J}^{n+1} \right). \label{eq:BDF-1 semi-discrete div A update}
\end{equation}

With the aid of equations \eqref{eq:BDF-1 semi-discrete phi update}, \eqref{eq:BDF-1 semi-discrete phi update at t_n}, and \eqref{eq:BDF-1 semi-discrete div A update}, along with the linearity of the operator $\mathcal{L}$, a direct calculation reveals that the residual \eqref{eq:semi-discrete Lorenz residual} at time level $n+1$ is given by
\begin{equation*}
    \frac{\phi^{n+1} - \phi^{n}}{c^{2} \Delta t} + \nabla \cdot \mathbf{A}^{n+1} = \mathcal{L}^{-1} \Bigg( \frac{2\phi^{n} - 3\phi^{n-1} + \phi^{n-2}}{c^2 \Delta t} + 2 \nabla \cdot \mathbf{A}^{n} - \nabla \cdot \mathbf{A}^{n-1} + \frac{\mu_0}{\alpha^2} \left( \frac{\rho^{n+1} - \rho^{n}}{\Delta t} + \nabla \cdot \mathbf{J}^{n+1} \right) \Bigg).
\end{equation*}
Note that we have used the relation $c^2 = 1/(\mu_0 \epsilon_0)$. From these calculations, we can see that the corresponding semi-discrete continuity equation \eqref{eq:semi-discrete continuity} acts as a source for the residual \eqref{eq:semi-discrete Lorenz residual}. The remaining terms in the operand for the inverse can  also be expressed directly in terms of this semi-discrete gauge, since
\begin{align*}
    \frac{2\phi^{n} - 3\phi^{n-1} + \phi^{n-2}}{c^2 \Delta t} + 2 \nabla \cdot \mathbf{A}^{n} - \nabla \cdot \mathbf{A}^{n-1} &= 2 \left( \frac{\phi^{n} - \phi^{n-1}}{c^{2} \Delta t} + \nabla \cdot \mathbf{A}^{n} \right) - \left( \frac{\phi^{n-1} - \phi^{n-2}}{c^{2} \Delta t} + \nabla \cdot \mathbf{A}^{n-1} \right), \\
    &\equiv 2 \epsilon^{n} - \epsilon^{n-1}.
\end{align*}
This completes the proof.
\end{proof}
With the aid of Lemma \ref{lemma:residual equivalence}, we are now prepared to prove the following theorem that establishes the time-consistency of the semi-discrete system.
\begin{theorem}
The semi-discrete Lorenz gauge formulation of Maxwell's equations \eqref{eq:BDF-1 semi-discrete phi}-\eqref{eq:semi-discrete Lorenz} is time-consistent in the sense that the semi-discrete Lorenz gauge condition \eqref{eq:semi-discrete Lorenz} is satisfied at any discrete time $t^{n+1}$ if and only if the corresponding semi-discrete continuity equation \eqref{eq:semi-discrete continuity} is also satisfied.
\label{thm:consistency in time}
\end{theorem}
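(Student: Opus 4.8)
The plan is to derive the theorem directly from Lemma \ref{lemma:residual equivalence} by an induction on the time level $n$. The lemma already tells us that the gauge residual $\epsilon^{n+1}$ obeys the recurrence
\[
\epsilon^{n+1} = \mathcal{L}^{-1}\!\left( 2\epsilon^{n} - \epsilon^{n-1} + \frac{\mu_0}{\alpha^2}\, C^{n+1} \right),
\]
where $C^{n+1} := (\rho^{n+1}-\rho^{n})/\Delta t + \nabla\cdot\mathbf{J}^{n+1}$ is precisely the left-hand side of the semi-discrete continuity equation \eqref{eq:semi-discrete continuity}. So the whole statement reduces to analyzing when this linear recurrence produces $\epsilon^{n+1}=0$.

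First I would treat the ``if'' direction. Assume $C^{m}=0$ for all time levels $m$ up to and including $n+1$, and assume the scheme is initialized consistently so that $\epsilon^{0}=\epsilon^{1}=0$ (the first two potential levels satisfy the discrete gauge, which is a standard initialization requirement and should be stated as a hypothesis). Then the recurrence reads $\epsilon^{m+1} = \mathcal{L}^{-1}(2\epsilon^{m}-\epsilon^{m-1})$, and since $\mathcal{L}^{-1}$ is linear, an immediate induction gives $\epsilon^{m+1}=0$ for all $m$, hence in particular at $t^{n+1}$. For the ``only if'' direction, I would argue contrapositively or, more cleanly, by a minimal-counterexample / induction argument: suppose the gauge residual vanishes at \emph{every} discrete time (which is the natural reading of ``the semi-discrete Lorenz gauge condition is satisfied at any discrete time''). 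Then $\epsilon^{n}=\epsilon^{n-1}=\epsilon^{n+1}=0$, and plugging these into the recurrence of Lemma \ref{lemma:residual equivalence} forces $\mathcal{L}^{-1}\!\left( \frac{\mu_0}{\alpha^2} C^{n+1}\right)=0$. Since $\mathcal{L} = \mathcal{I} - \frac{1}{\alpha^2}\Delta$ is invertible with the prefactor $\mu_0/\alpha^2 \neq 0$, applying $\mathcal{L}$ to both sides yields $C^{n+1}=0$, i.e.\ the semi-discrete continuity equation holds. Running this at each $n$ gives the continuity equation at all time levels.

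The main obstacle — and the place where I would be most careful — is the invertibility/injectivity of $\mathcal{L}^{-1}$ (equivalently, that $\mathcal{L}u=0 \Rightarrow u=0$), since the conclusion $C^{n+1}=0$ rests entirely on being able to cancel the operator. On a periodic domain, or on a bounded domain with the homogeneous boundary conditions encoded by the layer potentials in \eqref{eq:u layer potential solution}, the modified Helmholtz operator $\mathcal{I}-\alpha^{-2}\Delta$ is positive definite and hence injective, so this is fine, but it should be flagged as the hypothesis under which ``time-consistency'' holds. A secondary subtlety is the precise logical quantification in the theorem statement: the cleanest true statement is the equivalence ``$\epsilon^{m}=0$ for all $m$ $\iff$ $C^{m}=0$ for all $m$,'' given consistent initialization $\epsilon^{0}=\epsilon^{1}=0$; one should not claim that vanishing of a single isolated $\epsilon^{n+1}$ forces $C^{n+1}=0$ without information about $\epsilon^{n}$ and $\epsilon^{n-1}$. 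I would therefore phrase the proof so that the induction carries the pair $(\epsilon^{n},\epsilon^{n-1})$ along, and note that under exact arithmetic with consistent start-up the two conditions propagate together, which is exactly the ``time-consistency'' being asserted. Everything else is a one-line consequence of the linearity of $\mathcal{L}^{-1}$ already exploited in the lemma.
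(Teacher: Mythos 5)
Your argument is essentially the paper's own proof: both directions are handled by induction on the time level using the recurrence of Lemma \ref{lemma:residual equivalence}, applying $\mathcal{L}$ to cancel $\mathcal{L}^{-1}$ to extract the continuity equation in one direction, and using linearity plus consistent starting values ($\epsilon^{-1}=\epsilon^{0}=0$ in the paper, your $\epsilon^{0}=\epsilon^{1}=0$) to propagate a vanishing residual in the other. Your added caveats on the injectivity of $\mathcal{L}$ and on reading ``at any discrete time'' as ``at all discrete times'' are correct sharpenings of hypotheses the paper leaves implicit, but the substance of the proof is the same.
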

\begin{proof}

We use a simple inductive argument to prove both directions. In the case of the forward direction, we assume that the semi-discrete gauge condition is satisfied at any discrete time $t^{n}$ such that the residual $\epsilon^{n} \equiv 0$. Combining this with equation \eqref{eq:semi-discrete Lorenz residual update}, established by Lemma \ref{lemma:residual equivalence}, it follows that the next time level satisfies
\begin{equation*}
    0 = \mathcal{L}^{-1} \Bigg( \frac{\mu_0}{\alpha^2} \left( \frac{\rho^{1} - \rho^{0}}{\Delta t} + \nabla \cdot \mathbf{J}^{1} \right) \Bigg).
\end{equation*}
Applying the operator $\mathcal{L}$ to both sides leads to
\begin{equation*}
    \frac{\rho^{1} - \rho^{0}}{\Delta t} + \nabla \cdot \mathbf{J}^{1} = 0.
\end{equation*}
This argument can be iterated $n$ times to show that
\begin{equation*}
    \frac{\rho^{n+1} - \rho^{n}}{\Delta t} + \nabla \cdot \mathbf{J}^{n+1} = 0,
\end{equation*}
holds at any discrete time $t^{n}$, as well, which establishes the forward direction.

A similar argument can be used for the converse. Here, we show that if the semi-discrete continuity equation \eqref{eq:semi-discrete continuity} is satisfied for any time level $n$, then the residual for the discrete gauge condition also satisfies $\epsilon^{n+1} \equiv 0$. First, we assume that the initial data and starting values satisfy $\epsilon^{-1} \equiv \epsilon^{0} \equiv 0$. Appealing to equation \eqref{eq:semi-discrete Lorenz residual update} with this initial data, it is clear that after a single time step, the residual in the gauge condition satisfies
\begin{equation*}
    \epsilon^{1} = \mathcal{L}^{-1} \Bigg( 2\epsilon^{0} - \epsilon^{-1} + \frac{\mu_0}{\alpha^2} \left( \frac{\rho^{1} - \rho^{0}}{\Delta t} + \nabla \cdot \mathbf{J}^{1} \right) \Bigg) \equiv \mathcal{L}^{-1} ( 0 ).
\end{equation*}
This argument can also be iterated $n$ more times to obtain the result, which finishes the proof.
\end{proof}

Theorem \ref{thm:consistency in time} motivates, to a large extent, our choice in presenting a low-order time discretization for the fields. By repeating the calculations above with a higher-order time discretization for the fields, one can easily see that the residual for the gauge is non-vanishing, so additional modifications will be needed. The extension to higher-order time accuracy is the subject of on-going work and will be presented in a subsequent paper. While we have neglected the error introduced by the discrete maps that transfer particle data to the mesh, the results we present for the numerical experiments in section \ref{sec:5 Numerical results} suggest that these effects are not significant. We will be using Theorem \ref{thm:consistency in time} to construct maps for the charge density which satisfy the semi-discrete form of the gauge condition.

\subsubsection{Satisfying Gauss' Law}

A benefit that comes with satisfying the semi-discrete Lorenz gauge \eqref{eq:semi-discrete Lorenz} is the corresponding satisfaction of Gauss' law \eqref{eq:Gauss-E}. We summarize this with the following theorem:

\begin{theorem} \label{thm:lorenz implies gauss}
    If the semi-discrete Lorenz gauge \eqref{eq:semi-discrete Lorenz} is satisfied, then Gauss' law \eqref{eq:Gauss-E} is also satisfied.
\end{theorem}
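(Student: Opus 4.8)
The plan is to derive the semi-discrete Gauss' law directly from the modified Helmholtz equation \eqref{eq:BDF-1 semi-discrete phi} for $\phi^{n+1}$ together with the semi-discrete gauge relation \eqref{eq:semi-discrete Lorenz} evaluated at two consecutive time levels; the vector-potential wave equation \eqref{eq:BDF-1 semi-discrete A} is not actually needed. First I would pin down the semi-discrete analogue of the electric field consistent with the BDF-1 treatment of $\partial_t$ used throughout, namely $\mathbf{E}^{n+1} = -\nabla \phi^{n+1} - (\mathbf{A}^{n+1} - \mathbf{A}^{n})/\Delta t$, so that the statement to be proved is $\nabla \cdot \mathbf{E}^{n+1} = \rho^{n+1}/\epsilon_0$. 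Taking the divergence gives $\nabla \cdot \mathbf{E}^{n+1} = -\Delta \phi^{n+1} - (\nabla\cdot\mathbf{A}^{n+1} - \nabla\cdot\mathbf{A}^{n})/\Delta t$, and the whole proof reduces to rewriting the two terms on the right.

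For the first term I would eliminate $-\Delta\phi^{n+1}$ using $\mathcal{L} = \mathcal{I} - \alpha^{-2}\Delta$: rearranging \eqref{eq:BDF-1 semi-discrete phi} yields $-\Delta\phi^{n+1} = \alpha^2\bigl(2\phi^n - \phi^{n-1} - \phi^{n+1}\bigr) + \rho^{n+1}/\epsilon_0$, and since $\alpha = 1/(c\Delta t)$ this is exactly $-(\phi^{n+1} - 2\phi^n + \phi^{n-1})/(c^2\Delta t^2) + \rho^{n+1}/\epsilon_0$. For the second term I would invoke the hypothesis that the semi-discrete Lorenz gauge holds (which, by Theorem \ref{thm:consistency in time}, persists at every level once it holds initially), giving $\nabla\cdot\mathbf{A}^{n+1} = -(\phi^{n+1}-\phi^n)/(c^2\Delta t)$ and, at the previous level, $\nabla\cdot\mathbf{A}^{n} = -(\phi^{n}-\phi^{n-1})/(c^2\Delta t)$; hence their difference quotient is $-(\phi^{n+1} - 2\phi^n + \phi^{n-1})/(c^2\Delta t^2)$, the same centered second difference that appeared above.

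Substituting both expressions into $\nabla \cdot \mathbf{E}^{n+1}$, the two copies of $(\phi^{n+1} - 2\phi^n + \phi^{n-1})/(c^2\Delta t^2)$ cancel and we are left with $\nabla\cdot\mathbf{E}^{n+1} = \rho^{n+1}/\epsilon_0$, i.e. the semi-discrete Gauss' law. I do not expect a genuine obstacle here: the argument is a one-line algebraic cancellation once the pieces are assembled. The only points that need care are bookkeeping ones — stating explicitly which semi-discrete definition of $\mathbf{E}$ is used, noting that the gauge must be applied at the two levels $n$ and $n+1$, and checking that the factor $\alpha^2$ produced by $\mathcal{L}$ matches the $c^2\Delta t$ in \eqref{eq:semi-discrete Lorenz} so that the two second differences are literally identical. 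It is also worth remarking that, as the derivation shows, the conclusion depends only on the scalar-potential update and the gauge, independently of how $\rho$ and $\mathbf{J}$ are mapped from the particles.
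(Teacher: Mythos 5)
Your argument is correct and follows essentially the same route as the paper's proof: both start from the semi-discrete field definition $\mathbf{E}^{n+1} = -\nabla\phi^{n+1} - (\mathbf{A}^{n+1}-\mathbf{A}^{n})/\Delta t$, insert the gauge condition at levels $n$ and $n+1$ to turn the divergence of the vector-potential difference into the second difference $(\phi^{n+1}-2\phi^{n}+\phi^{n-1})/(c^2\Delta t^2)$, and cancel it against the same term produced by the scalar-potential update \eqref{eq:BDF-1 semi-discrete phi}. The only difference is cosmetic (you rearrange the Helmholtz update for $-\Delta\phi^{n+1}$ up front rather than invoking it at the last step), so no further comment is needed.
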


\begin{proof}
    Consider the backward Euler discretization of the Lorenz gauge condition in time:
    \begin{equation*}
        \frac{1}{c^2}\frac{\phi^{n+1} - \phi^{n}}{\Delta t} + \nabla \cdot \textbf{A}^{n+1} = 0.
    \end{equation*}
    We know from the definition \eqref{eq:Convert potentials to EB} that
    \begin{equation*}
        \textbf{E}^{n+1} = -\nabla\phi^{n+1} - \frac{\textbf{A}^{n+1} - \textbf{A}^n}{\Delta t}.
    \end{equation*}
    Applying the divergence operator to both sides yields
    \begin{align*}
        \begin{split}
            \nabla \cdot \textbf{E}^{n+1} &= -\Delta \phi^{n+1} - \frac{\nabla\cdot\textbf{A}^{n+1} - \nabla\cdot\textbf{A}^{n}}{\Delta t} \\
            &= -\Delta\phi^{n+1} - \dfrac{\left(-\dfrac{1}{c^2}\dfrac{\phi^{n+1} - \phi^{n}}{\Delta t}\right) - \left(-\dfrac{1}{c^2}\dfrac{\phi^{n} - \phi^{n-1}}{\Delta t}\right)}{\Delta t}\\
            &= -\Delta \phi^{n+1} + \frac{1}{c^2}\frac{\phi^{n+1} - 2\phi^{n} + \phi^{n-1}}{\Delta t^2}.
        \end{split}
    \end{align*}
    From the semi-discrete update \eqref{eq:BDF-1 semi-discrete phi} for the scalar potential, the last line is equal to $\rho^{n+1}/\epsilon_{0}$, which completes the proof.
\end{proof}
It is worth noting that the above argument requires that the discrete time derivative operator applied to the Lorenz gauge be the same as the one applied to the scalar potential wave equation.

%
%

%
%

%
%
%
%
\subsection{Summary}
\label{subsec:2 Summary}

In this section, we introduced the Hamiltonian formulation of the Vlasov-Maxwell system considered in this work.  The decision to use the Hamiltonian formulation was primarily motivated by the desire to eliminate time derivatives of the vector potential resulting from a gauge formulation.  Using backwards finite difference approximations to approximate time derivatives, we introduced the semi-discrete form of the wave equations used to evolve the potentials.  The resulting algorithm is summarized in Algorithm \ref{alg:IAEM-skeleton}. We discussed key properties satisfied by the semi-discrete potential formulation that provide a useful connection between the continuity equation and Lorenz gauge condition in the semi-discrete sense. We also showed how these properties are connected to Gauss' law.  In subsequent sections, we shall use these properties to build maps that ensure the satisfaction of the continuity equation at the semi-discrete level.

\begin{algorithm}[h]
    \caption{Outline of the PIC algorithm with the improved asymmetric Euler method (IAEM)}
    Perform one time step of the PIC cycle using the improved asymmetric Euler method. 
    \label{alg:IAEM-skeleton}
    \begin{algorithmic}[1]
    \State \textbf{Given}: $(\mathbf{x}_{i}^{0}, \mathbf{P}_{i}^{0}, \mathbf{v}_{i}^{0})$, as well as the fields $ \left( \phi^{0}, \nabla \phi^{0} \right)$ and $\mathbf{A}^{0}, \nabla \mathbf{A}^{0}$
    \State Initialize $\mathbf{v}_{i}^{-1} = \mathbf{v}_{i}(-\Delta t)$ using a Taylor approximation.
    \While{stepping}
    
    \vspace{10pt}
    
    \State \label{alg:start-time-loop} Update the particle positions with $$\mathbf{x}_{i}^{n+1} = \mathbf{x}_{i}^{n} + \mathbf{v}_{i}^{n} \Delta t.$$
    
    \State Using $\textbf{x}^{n+1}$ and $\textbf{v}^{n}$, compute the current density $\textbf{J}^{n+1}$ using a bilinear mapping. \label{alg:step-current-den-naive}

    \State Using $\textbf{x}^{n+1}$, compute the charge density $\rho^{n+1}$ using a bilinear mapping. \label{alg:step-charge-den-naive}
    
    \State Compute the potentials, $\phi$ and $\textbf{A}$, at time level $t^{n+1}$ using the semi-discrete BDF method. \label{alg:step-potentials-naive}

    \State Compute the spatial derivatives of the potentials at time level $t^{n+1}$ using the semi-discrete BDF method. \label{alg:step-potential-derivatives-naive}

    \State Evaluate the Taylor corrected particle velocities $$\mathbf{v}_{i}^{*} = 2 \mathbf{v}_{i}^{n} - \mathbf{v}_{i}^{n-1}.$$
    
    \State Calculate the new generalized momentum according to $$\mathbf{P}_{i}^{n+1} = \mathbf{P}_{i}^{n} + q_i \Big( - \nabla \phi^{n+1} + \nabla \mathbf{A}^{n+1} \cdot \mathbf{v}_{i}^{*} \Big)\Delta t.$$
    
    \State Convert the new generalized momenta into new particle velocities with $$ \mathbf{v}_{i}^{n+1} =  \frac{c^2 \left( \mathbf{P}_{i}^{n+1} - q_{i} \mathbf{A}^{n+1}\right)}{\sqrt{ c^2\left( \mathbf{P}_{i}^{n+1} - q_{i} \mathbf{A}^{n+1}\right)^2 + \left(m_{i}c^2\right)^2}}. $$
    
    \State Shift the time history data and return to step \ref{alg:start-time-loop} to begin the next time step. 
    
    \EndWhile
    \end{algorithmic}
\end{algorithm}

%
%
%
%
\section{Enforcing the Semi-discrete Lorenz Gauge Condition}
\label{sec:3 exact charge maps}

In this section, we present two charge maps that conserve a semi-discrete form of the Lorenz gauge condition in an exact manner as a consequence of the theory presented in section \ref{sec:2 formulation}.  The first map we propose acquires the current density from a bilinear mapping to the mesh, which is used to solve the continuity equation to define a new charge density.  In this approach, we consider two methods for computing the numerical derivatives in the evaluation of $\nabla \cdot \textbf{J}$.  The second technique introduces an exact charge conserving map that is well suited for the multi-dimensional Green's function wave solver proposed in \cite{cheng2017asymptotic}. Although we do not implement the latter approach in this paper, we plan to explore this approach in future work.  As an alternative to charge conserving maps, we introduce a correction technique to enforce a semi-discrete Lorenz gauge condition, which corrects the scalar potential in a direct manner, resulting in a point-wise consistent set of field variables.

%
%
\subsection{A Charge Map Utilizing Numerical Derivatives}
\label{subsec:3-charge-conserving-numerical-derivatives}


We seek to create a map for the charge density that is consistent with a semi-discrete continuity equation \eqref{eq:semi-discrete continuity}. Our goal is that this map enforce the Lorenz Gauge condition. The semi-discrete properties established in Theorem \ref{thm:consistency in time} requires that the method satisfy \eqref{eq:semi-discrete continuity} point-wise on the mesh. From a practical perspective, if one goes through the derivation of Lemma \ref{lemma:residual equivalence}, an important observation is that if this Lemma is going to hold in the discrete setting, the discrete divergence operators used to compute the solution to \eqref{eq:semi-discrete continuity} must be the same discrete operator used to compute the divergence of $\mathbf{A}$. The remark in section \ref{subsec:Remark on derivatives}, which is presented later, provides further clarification on this matter.     

To create our map for the charge density, we will solve for $\rho^{n+1}$ using an update of the semi-discrete continuity equation \eqref{eq:semi-discrete continuity}. The current density $\mathbf{J}^{n+1}$  used in this update is obtained from the particles using a bilinear map. To compute the discrete divergence operator, we will utilize either the fast Fourier transform (FFT) or the sixth-order centered finite difference method (FD6). Additionally, to make the point-wise map conservative, we introduce a Lagrange multiplier, which we now discuss. In the construction, we make the following observation: Starting from the non-corrected point-wise update
$$\rho_{i,j}^{n+1} = \rho_{i,j}^{n} - \Delta t \nabla \cdot \mathbf{J}_{i,j}^{n+1},$$ 
after summing over the periodic domain, we obtain
\begin{equation}
    \sum_{i,j}\rho_{i,j}^{n+1} = \sum_{i,j}\rho_{i,j}^{n} - \Delta t\sum_{i,j}\left(\nabla \cdot \mathbf{J}^{n+1}\right)_{i,j}.
\end{equation}
To ensure charge conservation, we require $$\sum_{i,j}\rho^{n+1} = \sum_{i,j}\rho^{n}, $$ which further implies the condition
$$\sum_{ij}\left(\nabla \cdot \mathbf{J}^{n+1}\right)_{ij} = 0.$$ 
We introduce the Lagrange multiplier, $\gamma$, to enforce this condition. For 2D, we define
\begin{equation*}
    \gamma \coloneqq -\frac{1}{N_xN_y}\sum_{i,j}{\left(\nabla\cdot \mathbf{J}\right)_{i,j}}, \quad \textbf{F} \coloneqq \frac{1}{2}\gamma\textbf{x},
\end{equation*}
and let $\mathbf{J}^* \coloneqq \mathbf{J} + \textbf{F}$ be the adjusted current density. In 3D, the $\frac{1}{2}$ would become a $\frac{1}{3}$. Correspondingly,
\begin{align}
    \begin{split}
        \sum_{i,j}\left(\nabla \cdot \mathbf{J}^{n+1,*}\right)_{i,j} &= \sum_{i,j}\left(\nabla \cdot \mathbf{J}^{n+1} + \nabla \cdot \textbf{F}\right)_{i,j} \\
        &= \sum_{i,j}\left(\nabla\cdot\mathbf{J}^{n+1} - \frac{1}{N_xN_y}\sum_{l,k}\left(\nabla \cdot \mathbf{J}^{n}\right)_{l,k}\right) \\
        &= \sum_{i,j}\left(\nabla \cdot \mathbf{J}^{n+1}\right)_{i,j} - \sum_{i,j}\left(\nabla\cdot\mathbf{J}^{n+1}\right)_{i,j} = 0.
    \end{split}
\end{align}
We now use $\rho^{n+1}$ and $\mathbf{J}^{n+1,*}$ in the update of $\phi^{n+1}$ and $\mathbf{A}^{n+1}$.

As noted earlier, in the above expressions, the 
divergence $\nabla \cdot \mathbf{J}_{i,j}^{n+1,*}$ is computed using either an FFT or a sixth-order finite differnce. For the FFT, we have the identity
\begin{equation*}
    \partial_{x}u = \mathcal{F}^{-1}_{x}\left[ik_x\mathcal{F}\left[u\right]\right], 
\end{equation*}
where $k_{x}$ is the wavenumber. Derivatives in $y$ can be calculated using identical formulas. For FD6, we use the following centered finite-difference stencil
\begin{equation*}
    \partial_{x} u_{i,j} = -\frac{1}{60}u_{i-3,j} + \frac{3}{20}u_{i-2,j} -\frac{3}{4}u_{i-1,j} + \frac{3}{4}u_{i+1,j} -\frac{3}{20}u_{i+2,j} + \frac{1}{60}u_{i+3,j},
\end{equation*}
with an analogous set of coefficients for the $y$ derivative. It should be mentioned that in numerical experiments, the schemes preserve the gauge condition to machine precision, with and without the Lagrange multiplier, using either the FFT or FD6 in the update of $\rho^{n+1}$ (see section \ref{sec:5 Numerical results}).

In summary, Algorithm \ref{alg:IAEM-skeleton} is modified in two ways. Line \ref{alg:step-charge-den-naive} is changed by eliminating the bilinear interpolation of particles to obtain $\rho^{n+1}$. Instead, $\rho^{n+1}$ is computed from the semi-discrete continuity equation \eqref{eq:semi-discrete continuity}, with line \ref{alg:step-current-den-naive} supplying the value of $\textbf{J}^{n+1}$ for whatever discrete derivative operator has been decided. Line \ref{alg:step-potential-derivatives-naive} computes the derivatives acquired from line \ref{alg:step-potentials-naive}, which remains unchanged, and it does so using the same discrete derivative operator, for reasons we now move on to explain.

\subsubsection{A Remark on the Derivatives}
\label{subsec:Remark on derivatives}

It is worth commenting upon the nature of the way the $\nabla \textbf{A}$ and $\nabla\phi$ values are obtained in the context of the IAEM method (section \ref{subsec:IAEM}). As stated above, the BDF method does have the ability to compute analytically the derivatives of $\textbf{A}^{n+1}$ and $\phi^{n+1}$ from the values of their previous timesteps, combined with the sources $\textbf{J}$ and $\rho$, respectively. There is a difficulty in this approach, however. The proof of Theorem \ref{thm:consistency in time}, which connects the gauge condition to the continuity equation, is predicated upon Lemma \ref{lemma:residual equivalence}, the proof of which assumes the derivative operator used on $\textbf{A}$ and $\textbf{J}$ are the same. In order to assert \eqref{eq:semi-discrete Lorenz residual update}, the following relationship must hold:
\begin{align*}
    \textbf{A}^{n+1} = \mathcal{L}^{-1}\left[2\textbf{A}^{n} - \textbf{A}^{n-1} + \frac{\mu_0}{\alpha^2}\textbf{J}^{n+1}\right] \implies \nabla \cdot \textbf{A}^{n+1} = \mathcal{L}^{-1}\left[2\nabla \cdot \textbf{A}^{n} - \nabla \cdot \textbf{A}^{n-1} + \frac{\mu_0}{\alpha^2}\nabla \cdot \textbf{J}^{n+1}\right].
\end{align*}
Of course, in the continuous case, this is trivial. However, should we, for example, update $\rho$ based on a discrete Fourier transform of $\nabla \cdot \textbf{J}$, then this theorem does not hold unless the spatial derivative applied to $\textbf{A}$ is likewise obtained using a discrete Fourier transform. In the fully discrete setting, Lemma \ref{lemma:residual equivalence} requires that the discrete derivative operators used to compute $\nabla \cdot \mathbf{A}$ and $\nabla \cdot \mathbf{J}$ have the same form.

%
%
%
%
%
\subsection{A Charge Map for the Multi-dimensional Boundary Integral Solution}
\label{subsec:3-charge-conserving-boundary-integral-solution}

The map used to construct $\rho^{n+1}$ in the previous section has a natural extension to boundary integral formulations that evolve the potentials using multi-dimensional Green's functions. The solutions for the potentials in terms of the boundary integral representation \eqref{eq:u layer potential solution} leads to the equations
\begin{align}
    \phi^{n+1}(\mathbf{x}) &= \int_{\Omega} G(\mathbf{x},\mathbf{y}) \left( 2 \phi^{n} - \phi^{n-1} + \frac{1}{\alpha^2 \epsilon_0} \rho^{n+1} \right)(\mathbf{y}) \, dV_{\mathbf{y}} + \int_{\partial \Omega} \left(  \sigma_{\phi}(\mathbf{y}) G(\mathbf{x},\mathbf{y}) + \gamma_{\phi}(\mathbf{y})\frac{\partial G}{\partial \mathbf{n}} \right) \, dS_{\mathbf{y}}, \label{eq:phi BIE solution} \\
    \mathbf{A}^{n+1}(\mathbf{x}) &= \int_{\Omega} G(\mathbf{x},\mathbf{y}) \left( 2 \mathbf{A}^{n} - \mathbf{A}^{n-1} + \frac{\mu_0}{\alpha^2} \mathbf{J}^{n+1} \right)(\mathbf{y}) \, dV_{\mathbf{y}} + \int_{\partial \Omega} \left(  \sigma_{\mathbf{A}}(\mathbf{y}) G(\mathbf{x},\mathbf{y}) + \gamma_{\mathbf{A}}(\mathbf{y})\frac{\partial G}{\partial \mathbf{n}} \right) \, dS_{\mathbf{y}}. \label{eq:A BIE solution}
\end{align}
When particles are introduced, maps to the mesh for $\rho^{n+1}$ and $\mathbf{J}^{n+1}$ generally do not satisfy the semi-discrete continuity equation \eqref{eq:semi-discrete continuity}, which, by Lemma \ref{lemma:residual equivalence}, also means that the semi-discrete Lorenz gauge condition will be violated; however, if the map for $\rho^{n+1}$ can be connected to the map for $\mathbf{J}^{n+1}$, then this constraint will indeed be satisfied. Following the observation of the previous section, we shall construct this map by solving equation \eqref{eq:semi-discrete continuity} for the charge at the new time level according to
\begin{equation*}
    \rho^{n+1} = \rho^{n} - \Delta t ~ \nabla \cdot \mathbf{J}^{n+1}.
\end{equation*}
This discretization, which is first-order in time, is consistent with the BDF field solver considered in this work. An obvious extension to higher-order is to include additional stencil points for the approximation of time derivative, which will be considered at a later time. Note that at the initial time $t^{0} = 0$, the particle data is known, and therefore, the charge and current densities $\rho^{0}$ and $\mathbf{J}^{0}$ are also known.  

The primary objective is to define the volume integral term
\begin{equation}
    \label{eq:multi-D map starting point}
    \int_{\Omega} G(\mathbf{x},\mathbf{y}) \rho^{n+1} (\mathbf{y}) \, dV_{\mathbf{y}} = \int_{\Omega} G(\mathbf{x},\mathbf{y}) \rho^{n} (\mathbf{y}) \, dV_{\mathbf{y}} -  \Delta t \int_{\Omega} G(\mathbf{x},\mathbf{y}) \nabla_{y} \cdot \mathbf{J}^{n+1} (\mathbf{y}) \, dV_{\mathbf{y}}.
\end{equation}
so that it is compatible with the low regularity charge and current data produced by particle simulations. In PIC, we construct $\mathbf{J}^{n+1}$ by scattering the current density of the particles to the mesh using particle shape functions that typically have low regularity. In most cases, especially in bounded domains, these maps are either piece-wise constant or linear functions, or combinations of these, so we would like to avoid taking derivatives of this term. With the aid of the divergence theorem, we can move the derivatives from the less regular particle data to the Green's function $G$ whose derivatives can be evaluated analytically. This allows us to write
\begin{equation*}
\int_{\Omega} G(\mathbf{x},\mathbf{y}) \nabla_{y} \cdot \mathbf{J}^{n+1} (\mathbf{y}) \, dV_{\mathbf{y}} = \int_{\partial\Omega} G(\mathbf{x},\mathbf{y}) ~\mathbf{J}^{n+1} (\mathbf{y}) \cdot \mathbf{n} \left(\mathbf{y}\right) \, dS_{\mathbf{y}} -\int_{\Omega} \nabla_{y} G(\mathbf{x},\mathbf{y}) \cdot \mathbf{J}^{n+1} (\mathbf{y}) \, dV_{\mathbf{y}},
\end{equation*}
which can be combined with equation \eqref{eq:multi-D map starting point}. This yields the map
\begin{align}
    \int_{\Omega} G(\mathbf{x},\mathbf{y}) \rho^{n+1}(\mathbf{y}) \, dV_{\mathbf{y}} &= \int_{\Omega} G(\mathbf{x},\mathbf{y}) \rho^{n}(\mathbf{y}) \, dV_{\mathbf{y}} \nonumber \\
    &- \Delta t \left[\int_{\partial\Omega} G(\mathbf{x},\mathbf{y}) ~\mathbf{J}^{n+1} (\mathbf{y}) \cdot \mathbf{n} \left(\mathbf{y}\right) \, dS_{\mathbf{y}} -\int_{\Omega} \nabla_{y} G(\mathbf{x},\mathbf{y}) \cdot \mathbf{J}^{n+1} (\mathbf{y}) \, dV_{\mathbf{y}}\right], \label{eq:multi-D charge map BIE}
\end{align}
which, by Theorem \ref{thm:consistency in time}, will be charge conserving and enforce the Lorenz gauge condition in the semi-discrete sense.

The charge map given by equation \eqref{eq:multi-D charge map BIE} can then be substituted into the boundary integral solution \eqref{eq:phi BIE solution} to give the update
\begin{align}
    \label{eq:phi BIE with map}
    \phi^{n+1}(\mathbf{x}) &= \int_{\Omega} G(\mathbf{x},\mathbf{y}) \left( 2 \phi^{n} - \phi^{n-1} + \frac{1}{\alpha^2 \epsilon_0}  \rho^{n} \right)(\mathbf{y}) \, dV_{\mathbf{y}} + \int_{\partial \Omega} \left(  \sigma_{\phi}(\mathbf{y}) G(\mathbf{x},\mathbf{y}) + \gamma_{\phi}(\mathbf{y})\frac{\partial G}{\partial \mathbf{n}} \right) \, dS_{\mathbf{y}} \nonumber \\ 
    &+  \frac{\Delta t}{\alpha^2 \epsilon_0} \left[ \int_{\Omega} \nabla_{y} G(\mathbf{x},\mathbf{y}) \cdot \mathbf{J}^{n+1} (\mathbf{y}) \, dV_{\mathbf{y}} - \int_{\partial\Omega} G(\mathbf{x},\mathbf{y}) ~\mathbf{J}^{n+1} (\mathbf{y}) \cdot \mathbf{n} \left(\mathbf{y}\right) \, dS_{\mathbf{y}} \right].
\end{align}
Expressions for the gradients of the scalar potential $\nabla_{x} \phi^{n+1}(\mathbf{x})$ can then be obtained by differentiating equation \eqref{eq:phi BIE with map}. Since the derivatives are in $\mathbf{x}$, they are transferred directly onto analytical functions. We remark that the modifications required to achieve compatibility with dimensionally-split solvers is non-trivial, since the divergence of the current density effectively couples the boundary conditions between different directions. However, the map \eqref{eq:phi BIE with map} would be well-suited for solvers that use the full boundary integral solution \cite{cheng2017asymptotic}.

Whereas in the previous approach we explicitly computed $\rho^{n+1}$ from $\rho^{n}$ and $\textbf{J}^{n+1}$ using \eqref{eq:semi-discrete continuity}, thus modifying line \ref{alg:step-charge-den-naive} and correspondingly \ref{alg:step-potential-derivatives-naive}, this approach circumnavigates this by instead replacing computing $\rho^{n+1}$ entirely, cutting out line \ref{alg:step-charge-den-naive}. We then modify lines \ref{alg:step-potentials-naive} and \ref{alg:step-potential-derivatives-naive} in Algorithm \ref{alg:IAEM-skeleton} both by computing $\phi^{n+1}$ using a combination of $\rho^{n}, \phi^{n}, \phi^{n-1},$ and $\textbf{J}^{n+1}$ in the BDF method, and likewise with the spatial derivatives. It should be noted that the computation of $\textbf{A}^{n+1}$ remains unchanged.

%
%
%
%
\subsection{Correcting $\phi$ Using the Gauge Condition}
\label{subsec:3-corrected_phi}
As an alternative approach to solving the continuity equation, we assume that a bilinear map of $\mathbf{J}^{n+1}$ is exact and use the gauge condition its self to create a correction to $\phi^{n+1}$. To do this, we first use a bilinear map to construct both the \lq\lq exact''  $\mathbf{J}^{n+1}$ and an approximate $\rho^{n+1}$ called $\rho_{A}^{n+1}$.  These are used to construct the \lq\lq exact'' vector potential $\mathbf{A}^{n+1}$ and the approximate scalar potential $\phi_A^{n+1}$. The correction to the scalar potential, $\phi_{C}^{n+1}$, that we propose makes use of the analytical derivatives \cite{christlieb2024pic} in the evaluation of the divergence of $\mathbf{A}^{n+1}$. To start, we update the vector potential and scalar potential using bilinear maps for the particle data, which gives 
\begin{align*}
    \mathbf{A}^{n+1}\left(x,y\right) &=  \mathcal{L}_{y}^{-1} \mathcal{L}_{x}^{-1} \left[ 2 \mathbf{A}^{n} - \mathbf{A}^{n-1} + \frac{1}{\alpha^{2}} \mu_0 \mathbf{J}^{n+1} \right]\left(x,y\right),  \\
    \phi^{n+1}_A\left(x,y\right) &=  \mathcal{L}_{y}^{-1} \mathcal{L}_{x}^{-1} \left[ 2 \phi^{n} - \phi^{n-1} + \frac{1}{\alpha^{2} \epsilon_{0}} \rho_{A}^{n+1} \right]\left(x,y\right),   
\end{align*}
where  $\phi^n$ and $\phi^{n-1}$ are the corrected values such that $\mathbf{A}^n$, $\mathbf{A}^{n-1}$, $\phi^n$ and $\phi^{n-1}$ satisfy the gauge condition. We decompose the scalar potential as
\begin{equation}
    \label{eq:gauge correcting decomposition}
    \phi^{n+1} := \phi^{n+1}_{A} + \phi^{n+1}_{C},
\end{equation}
where $\phi^{n+1}_{A}$ is the approximate potential and $\phi^{n+1}_{C}$ is its correction. We also assume that $\mathbf{A}^{n+1}$ and $\phi^{n+1}$ satisfy the gauge condition in the sense that
\begin{equation*}
\frac{1}{c^2} \frac{\phi^{n+1}-\phi^{n}}{\Delta t} + \nabla \cdot \mathbf{A}^{n+1} =0.
\end{equation*}
Since this is designed to be a correction method, the discrete divergence can be evaluated using a variety of methods, including the analytical differentiation techniques developed in our previous work \cite{christlieb2024pic}. Rearranging the expression for the gauge error, we obtain the correction
\begin{equation*}
\phi^{n+1}_C =\phi^{n}  -\phi^{n+1}_A - c^2 \Delta t \left ( \partial_{x} A^{(1), n+1} + \partial_{y} A^{(2), n+1} \right ).
\end{equation*}
Once the correction $\phi^{n+1}_C$ is obtained, we then form the total potential according to the definition \eqref{eq:gauge correcting decomposition} and compute its numerical derivatives for the particle advance used in the algorithm (no significant difference was seen between FFT and FD6).  The use of a numerical derivative in the particle advance is because the terms that make up $\phi^{n+1}_C$ do not sit under the operator $\mathcal{L}^{-1}(\cdot)$.  

This is the least intrusive of all our methods, being accomplished by simply adding a line after \ref{alg:step-potential-derivatives-naive}, in which we compute the correction to $\phi$ and its numerical derivatives.  As will be seen in the numerical results, by construction, this method satisfies the gauge error to machine precision.  Also, it is worth noting that if one needs the correct $\rho$, one can set up a iterative method that can compute $\rho^{n+1}$ from $\phi^{n+1}$.  By Lemma \ref{lemma:residual equivalence}, $\rho^{n+1}$ and $\mathbf{J}^{n+1}$ will satisfy \eqref{eq:semi-discrete continuity} because $\phi^{n+1}$ and $\mathbf{A}^{n+1}$ satisfy \eqref{eq:semi-discrete Lorenz}.  Doing so would allow the use of analytical derivatives for  $\phi^{n+1}$, as in \cite{christlieb2024pic}, in the particle update and would be useful in exploring problems with geometry. This will be explored in future work.   


%
%
%
%
\subsection{Summary}
\label{subsec:3 Summary}

In this section, we presented gauge-conserving maps that use Theorem \ref{thm:consistency in time} in a direct manner to enforce the Lorenz gauge in the semi-discrete sense.  The first map computes the current density $\textbf{J}^{n+1}$ using a standard bilinear map, then computes $\rho^{n+1}$ as the solution of a semi-discrete continuity equation.  A Lagrange multiplier was also used to make certain that the continuity equation was enforced globally, though numerical experiments suggest this may not be necessary.  The second map we proposed is based on the multi-dimensional Green's function and is well-suited for solvers that leverage a boundary integral formulation, and will be investigated in future work.  These maps, which are based on semi-discrete theory, do not account for errors introduced by the spatial discretization.  Additionally, we introduced a third approach which identifies a correction to $\phi$ based on the discrete gauge condition.  The changes made to Algorithm \ref{alg:IAEM-skeleton} for all three methods are summarized in Table \ref{tab:algorithm modifications}.  In the next section, we investigate the capabilities of the first and third methods in problems defined on periodic domains.  We note that in all cases, developing appropriate boundary conditions for bounded domains for inflow and outflow of charge that is consistent with gauge preservation is the next important step, and this will be the subject of future work.  

\begin{table}[!htb]
    \centering
    \def\arraystretch{1.2}
    \begin{tabular}{ | c || c | }
        \hline
        \textbf{Method}  & \textbf{Changes} \\
        \hline
        Numerical Derivatives (FFT and FD6) & Lines \ref{alg:step-current-den-naive} and \ref{alg:step-potential-derivatives-naive} \\
        Multi-dimensional Boundary Integral & Lines \cancel{\ref{alg:step-charge-den-naive}},  \ref{alg:step-potentials-naive}, and \ref{alg:step-potential-derivatives-naive} \\
        Gauge correction for $\phi$ & Add line to correct $\phi$ after \ref{alg:step-potential-derivatives-naive}. \\
        \hline
    \end{tabular}
    \caption{The three methods we introduce and how they change Algorithm \ref{alg:IAEM-skeleton}.}
    \label{tab:algorithm modifications}
\end{table}

%
%
%
%
\section{Numerical Experiments}
\label{sec:5 Numerical results}

This section contains the numerical results for the PIC method using the exact and approximate methods to enforce the Lorenz gauge condition. We consider two different test problems with dynamic and steady state qualities. We first consider the Weibel instability, which is a streaming instability that occurs in a periodic domain. The second example we consider is a simulation of a non-relativistic drifting cloud of electrons against a stationary cluster of ions. In each example, we compare the performance of the different methods by inspecting the fully discrete Lorenz gauge condition and monitoring its behavior as a function of time. We conclude the section by summarizing the key results of the experiments.

%
%
\subsection{Relativistic Weibel Instability}
\label{subsec:weibel instability}

The Weibel instability \cite{Weibel1959} is a type of streaming instability resulting from an anistropic distribution of momenta, which is prevalent in many applications of high-energy-density physics including astrophysical plasmas \cite{WeibelAstro2021} and fusion applications \cite{FluidKineticICFsims2005}. In such circumstances, the momenta in different directions can vary by several orders of magnitude. Initially, the strong currents generated from the momenta create filament-like structures that eventually interact due to the growth in the magnetic field. Over time, the magnetic field can become quite turbulent, and the currents self-organize into larger connected networks. During this self-organization phase, there is an energy conversion mechanism that transfers the kinetic energy from the plasma into the magnetic field, which attempts to make the distribution of momenta more isotropic. The resulting instability creates the formation of magnetic islands and other structures which store massive amounts of energy. In such highly turbulent regions, this can lead to the emergence of other plasma phenomena such as magnetic reconnection, in which energy is released from the fields back into the plasma \cite{FonsecaWeibel03}.

The setup for this test problem can be described as follows. The domain for the problem is a periodic box defined on $[-L_{x}/2, L_{x}/2] \times [-L_{y}/2, L_{y}/2]$ in units of the electron skin depth $c/\omega_{pe}$. The particular values of $L_{x}$ and $L_{y}$ are carefully chosen using the dispersion relation and will be specified later. Here, $c$ is the speed of light and $\omega_{pe}$ is the electron angular plasma frequency. The system being modeled consists of two groups of infinite counter-propagating sheets of electrons and a uniform background of stationary ions. The electrons in each group are prescribed momenta from the ``waterbag" distribution
\begin{equation}
    \label{eq:Weibel IC}
    f(\mathbf{P}) = \frac{1}{2 P_{\myparallel}}\delta \left(P^{(1)} - P_{\perp}\right) \left[  \Theta \left( P^{(2)} - P_{\myparallel}\right) - \Theta\left( P^{(2)} + P_{\myparallel}\right) \right],
\end{equation}
where $\Theta(x)$ is the unit step function, and we choose $P_{\perp} > P_{\myparallel} > 0$ to induce an instability. Along the $x$ direction, electrons are prescribed only a drift corresponding to $P_{\perp}$, and are linearly spaced in the interval $[-P_{\myparallel}, P_{\myparallel})$ in the $y$ direction. We make the problem charge neutral by setting the positions of the electrons to be equal to the ions. Electrons belonging to the first group are initialized according to the distribution \eqref{eq:Weibel IC}, and those belonging to the second group are defined to be a mirror image (in momenta) of those in the first. This creates a return current that guarantees current neutrality in the initial data. For simplicity, we shall assume that the sheets have the same number density $\bar{n}$, but this is not necessary. In fact, some earlier studies explored the structure of the instability for interacting streams with different densities and drift velocities \cite{LeeFilamentation1973}. The particular values of the plasma parameters used in this experiment are summarized in Table \ref{tab:weibel plasma parameters}. We remark that during the initialization phase of our implementation, we prescribe the velocities of the particles and then convert them to conjugate momenta using $\mathbf{P}_{i} = \gamma m_{i} \mathbf{v}_{i}$, since $\mathbf{A} = 0$ at the initial time. The electron velocities used to construct the distribution \eqref{eq:Weibel IC} are equivalently given in units of $c$ by $v_{\perp} = 1/2$ and $v_{\myparallel} = 1/100$. When these values are converted to conjugate momenta, we obtain $P_{\perp} \approx 5.773888 \times 10^{-1}$ and $P_{\myparallel} \approx 1.154778 \times 10^{-2}$, respectively, which are given in units of $m_{e}c$. Lastly, we note that the normalized permittivity and normalized permeability are given by $\sigma_{1} = \sigma_{2} = 1$, and the normalized speed of light is $\kappa = 1$.

\begin{table}[!ht]
    \centering
    \def\arraystretch{1.2}
    \begin{tabular}{ | c || c | }
        \hline
        \textbf{Parameter}  & \textbf{Value} \\
        \hline
        Average number density ($\bar{n}$) [m$^{-3}$] & $1.0\times 10^{10}$ \\
        Average electron temperature ($\bar{T}$) [K] & $1.0\times 10^{4}$ \\
        Electron angular plasma period ($\omega_{pe}^{-1}$) [s/rad] & $1.772688\times 10^{-7}$ \\
        Electron skin depth ($c/\omega_{pe})$ [m] & $5.314386\times 10^{1}$ \\
        Electron drift velocity in $x$ ($v_{\perp}$) [m/s] & $c/2$ \\
        Maximum electron velocity in $y$ ($v_{\myparallel}$) [m/s] & $c/100$ \\
        \hline
    \end{tabular}
    \caption{Plasma parameters used in the simulation of the Weibel instability. All simulation particles are prescribed a drift velocity corresponding to $v_{\perp}$ in the $x$ direction while the $y$ component of their velocities are sampled from a uniform distribution scaled to the interval $[-v_{\myparallel}, v_{\myparallel})$.}
    \label{tab:weibel plasma parameters}
\end{table}

In other papers, which numerically simulate the Weibel instability, e.g., \cite{LeeFilamentation1973,Chen-ImplicitPIC-Darwin2014, MorseWeibel1971}, the particle velocities are obtained using specialized sampling methods that mitigate the effects of noise in the starting distribution. As pointed out in \cite{MorseWeibel1971}, this can lead to electrostatic effects, such as Landau damping, which will alter the growth of the instability. This motivates the consideration of the distribution \eqref{eq:Weibel IC}, as it is easy to compare with theory and control effects from sampling noise. For such distributions, the growth rate in the magnetic field can be calculated from the dispersion relation \cite{yoon1987exact}
\begin{equation*}
    1 - \frac{c^2 k_{\myparallel}^{2}}{\omega^{2}} - \frac{\omega_{pe}^{2}}{\omega^{2} \gamma} \left[ G(\beta_{\myparallel}) + \frac{\beta_{\perp}^{2}}{2 \left(1 - \beta_{\myparallel}^{2}\right)} \left( \frac{c^2 k_{\myparallel}^{2} - \omega^{2}}{\omega^{2} - c^{2} k_{\myparallel}^{2} \beta_{\myparallel}^{2} } \right) \right] = 0.
\end{equation*}
Here, $k_{\myparallel}$ is the wavenumber in the $y$ direction, $\omega_{pe}$ is the electron plasma frequency and
\begin{equation*}
    \gamma := \sqrt{1 + \frac{P_{\perp}^{2}}{m_{e}^{2} c^{2}} + \frac{P_{\myparallel}^{2}}{m_{e}^{2} c^{2}} }, \quad \beta_{\perp} := \frac{P_{\perp}}{\gamma m_{e} c}, \quad \beta_{\myparallel} := \frac{P_{\myparallel}}{\gamma m_{e} c}, \quad G(\beta_{\myparallel}) := \frac{1}{2 \beta_{\myparallel}} \ln \left( \frac{1 + \beta_{\myparallel}}{1 - \beta_{\myparallel}} \right),
\end{equation*}
with $m_{e}$ being the electron mass. In particular, Yoon and Davidson \cite{yoon1987exact} showed that an instability occurs if and only if the condition
\begin{equation*}
    \frac{\beta_{\perp}^{2}}{2 \beta_{\myparallel}^{2}} > \left( 1 - \beta_{\myparallel} \right) G(\beta_{\myparallel}),
\end{equation*}
is satisfied, and that its corresponding growth rate can be directly calculated using the equation
\begin{equation}
    \label{eq:Yoon Weibel growth rate}
    \text{Im}(\omega) = \frac{1}{\sqrt{2}} \left[ \left( W_{1}^{2} + W_{2} \right)^{1/2} - W_{1} \right]^{1/2},
\end{equation}
where we have introduced the definitions
\begin{equation*}
    W_{1} := c^{2} k_{\myparallel}^{2} \beta_{\myparallel}^{2} + \frac{\omega_{pe}^{2} \beta_{\perp}^{2}}{2 \gamma \beta_{\myparallel}^{2}} - c^{2} \left( k_{0}^{2} - k_{\myparallel}^{2} \right), \quad
    W_{2} := 4c^{4} k_{\myparallel}^{2} \beta_{\myparallel}^{2} \left( k_{0}^{2} - k_{\myparallel}^{2} \right),  \quad
    k_{0} := \frac{\omega_{pe}^{2}}{\gamma c^{2}} \left( \frac{\beta_{\perp}^{2}}{2 \beta_{\myparallel}^{2} \left( 1 - \beta_{\myparallel}^{2} \right)} - G(\beta_{\myparallel}) \right). 
\end{equation*}
Finally, we remark that the range of admissible values of $k_{\myparallel}$ for the instability is $0 < k_{\myparallel}^{2} < k_{0}^{2}$.

\begin{figure}[!htb]
    \centering
    \includegraphics[clip, trim={0cm, 0cm, 0cm, 0cm}, scale=0.25]{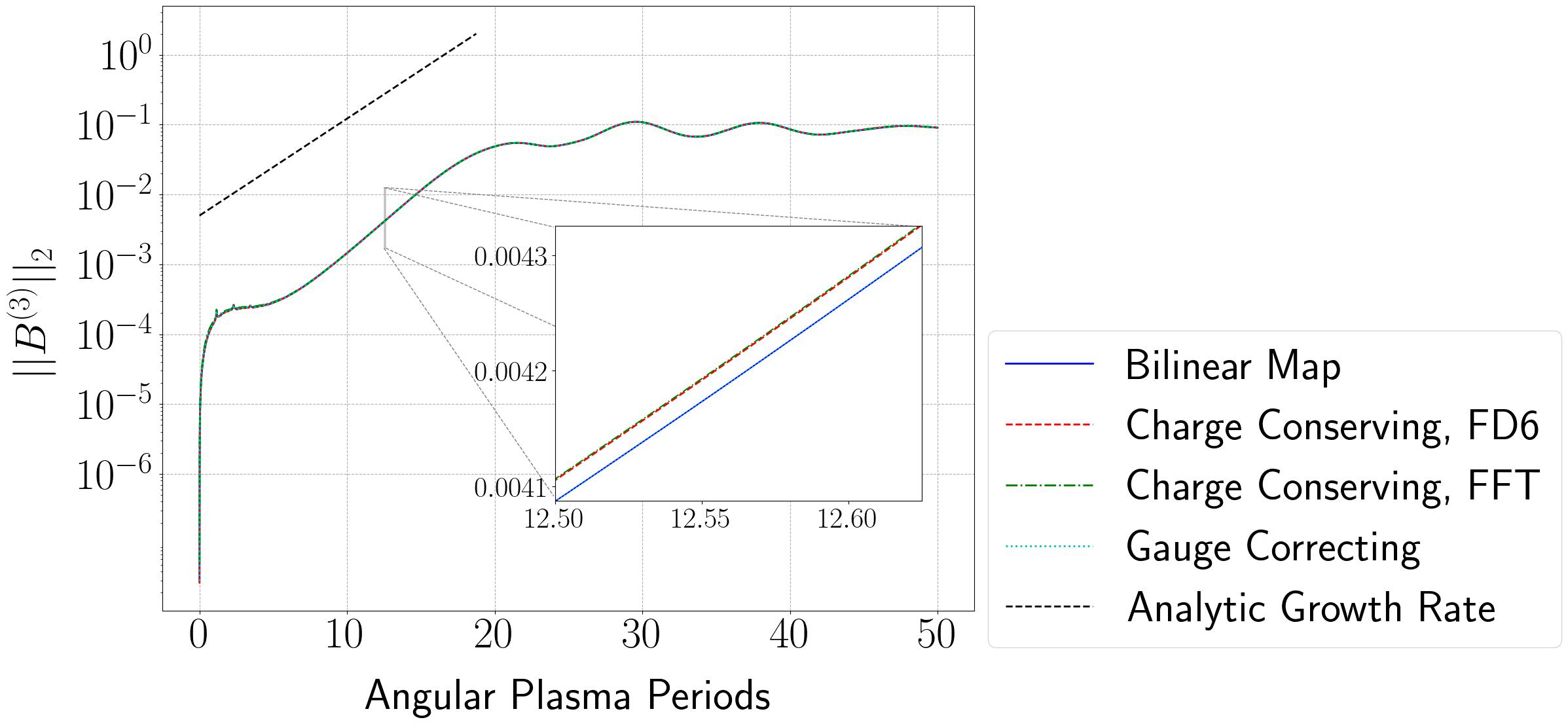}
    \caption{Growth in the magnetic field energy for the Weibel instability. We compare the growth rate in the $\ell_{2}$-norm of the magnetic field $B^{(3)}$ for different methods against an analytical growth rate predicted from linear response theory \cite{yoon1987exact}. The analytical growth rate for this configuration is determined to be $\text{Im}(\omega) \approx 0.319734$. We observe good agreement with the theoretically predicted growth rate for each of the methods.}
    \label{fig:Weibel Magnetic Magnitude}
\end{figure}

\begin{figure}[!htb]
    \centering
    \includegraphics[clip, trim={0cm, 0cm, 0cm, 0cm}, scale=0.25]{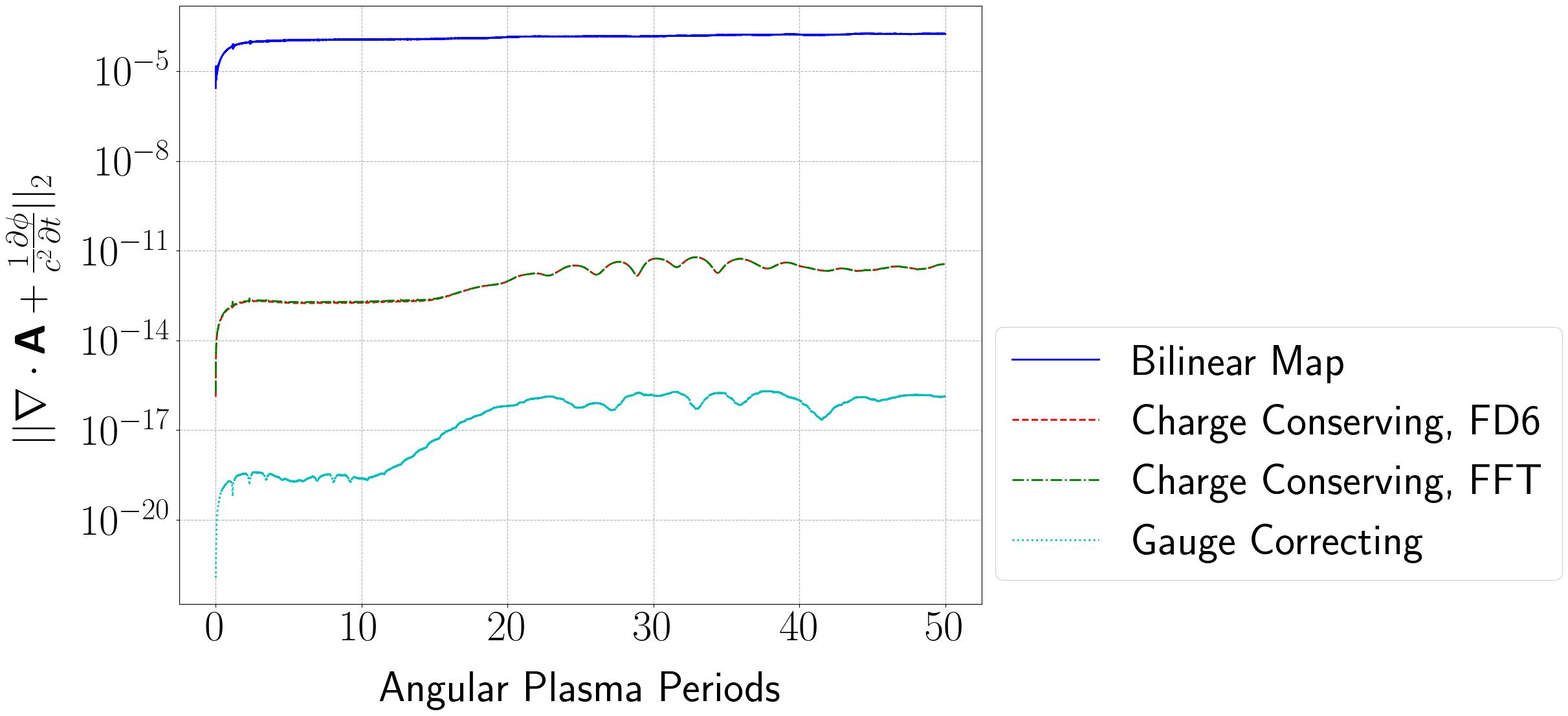}
    \caption{The gauge error for different methods applied to the Weibel instability test problem. Naive interpolation introduces significant gauge error, whereas the maps based on the continuity equation or gauge correction result in substantial improvement. It should be noted that while FFT and FD6 appear identical, they differ by $\mathcal{O}(10^{-14})$. The gauge correcting approach produced the smallest gauge error among the methods we considered.}
    \label{fig:Weibel Gauge Error}
\end{figure}

Using the parameters derived from Table \ref{tab:weibel plasma parameters} in an appropriately rescaled equation of the growth rate \eqref{eq:Yoon Weibel growth rate}, we find the value of $k_{\myparallel}$ that achieves the maximum growth is $k_{\myparallel} \approx 5.445191$. Based on this value, we scale the size of the (normalized) simulation domain so that $L_{x} = L_{y} = 1.153896$ in units of $c/\omega_{pe}$. The corresponding theoretical growth rate in the magnetic field associated with this problem is $ \text{Im}(\omega) \approx 0.319734$. We ran the simulation to a final time of $T = 50$ in units of $\omega_{pe}^{-1}$ with a $128 \times 128$ mesh. The total number of simulation particles used in the experiment was set to $1 \times 10^{6}$ and was split equally between ions and electrons. Each electron group, therefore, consists of $2.5 \times 10^{5}$ particles, and the number of particles per cell was approximately $61$. In Figure \ref{fig:Weibel Magnetic Magnitude}, we compare the growth rates in the $\ell_{2}$-norm of the magnetic field $B^{(3)}$ obtained with different methods. We find that each of the methods, including the naive approach, show good agreement with the theoretical growth rate predicted from equation \eqref{eq:Yoon Weibel growth rate}. Additionally in Figure \ref{fig:Weibel Gauge Error}, we plot the error in the gauge condition obtained with each of the methods. The naive approach, which uses bilinear maps for the charge and current densities introduces significant gauge error and increases over time. In contrast, the maps which use either the continuity equation or the gauge correction technique show significant improvement in reducing these errors when compared to the naive implementation. Most notably, the gauge correction technique ensures the satisfaction of the gauge condition at the level of machine precision.

\subsection{Drifting Cloud of Electrons}

Given the task of minimizing the gauge error, a test problem that tends to have a high gauge error was needed. When charged particles move through neutral space, the system is prone to gauge errors unless properly managed. As such, we take a periodic domain and induce a potential well by placing a grouping of ions, which are normally distributed in space and stationary, at its center. The electrons are given an identical distribution in space, but are given a drift velocity equal to $c/100$. With such a drift, the electrons are able to escape the potential well created by the ions and move throughout the domain (see Figure \ref{fig:moving-cloud-scatter-plots}). The domain, itself, is a periodic square $\left[-1/2 ,1/2\right]^{2}$ given in units of $\lambda_{D}$, and we run the simulation to a final time of $T = 1/2$ in units of $\omega_{pe}^{-1}$. The specific values for the simulation parameters used in our experiment are specified in Table \ref{tab:cloud plasma parameters}. The normalized permittivity and normalized permeability are given by $\sigma_{1} = 1$ and $\sigma_{2} = 1.686555 \times 10^{-6}$, respectively, and the normalized speed of light is $\kappa = 7.700159 \times 10^{2}$.

\begin{table}[!htb]
    \centering
    \def\arraystretch{1.2}
    \begin{tabular}{ | c || c | }
        \hline
        \textbf{Parameter}  & \textbf{Value} \\
        \hline
        Average number density ($\bar{n}$) [m$^{-3}$] & $1.0\times 10^{13}$ \\
        Average electron temperature ($\bar{T}$) [K] & $1.0\times 10^{4}$ \\
        Debye length ($\lambda_{D}$) [m] & $2.182496 \times 10^{-3}$ \\
        Electron angular plasma period ($\omega_{pe}^{-1}$) [s/rad] & $5.605733\times 10^{-9}$ \\
        Electron drift velocity ($v_{d}^{(1)} = v_{d}^{(2)}$) [m/s] & $c/ 100$ \\
        Electron thermal velocity ($v_{th} = \lambda_D  \omega_{pe}$) [m/s] & $3.893328 \times 10^{5}$ \\
        \hline
    \end{tabular}
    \caption{Plasma parameters used in the simulation of the drifting cloud of electrons. }
    \label{tab:cloud plasma parameters}
\end{table}

In Figure \ref{fig:Cloud Gauge Error}, we plot the gauge error for the drifting cloud problem and compare the proposed methods against a naive implementation that uses bilinear maps for both the charge density and current density. In each of the methods, we examined the gauge error on spatial meshes with different resolution, with the coarsest mesh being $16 \times 16$ and the finest being $64 \times 64$. Aside from the naive approach, we note that the gauge error for a given method increases proportionally with the resolution of the spatial discretization. One explanation for this is that coarser spatial discretizations are less likely to resolve the gauge error, which is generally small, compared to more refined meshes. Using bilinear interpolation for both the charge density and current density results in a gauge error that starts at $\mathcal{O}\left(10^{-5}\right)$ and quickly shrinks to $\mathcal{O}\left(10^{-7}\right)$. On the other hand, we find that using a continuity equation to define the charge density, in combination with a bilinear mapping for the current density, results in a much lower error which is $\mathcal{O}\left(10^{-13}\right)$. This is true for methods which compute the numerical derivatives using either the FFT or the sixth-order finite difference method. We remark that the method used for differentiation must be identical to the one used for the potentials, following the discussion in section \ref{subsec:Remark on derivatives}.  In practice, solving the system with this technique results in an accumulation of error that is close to machine precision, yielding $\mathcal{O}(10^{-13})$ accuracy. The gauge correction technique, on the other hand, does not accumulate this error, so we observe a much smaller error of $\mathcal{O}(10^{-18})$.

\begin{figure}[!htb]
    \centering
    \subfloat[$t = 0$]{
    \includegraphics[clip, trim={0cm, 0cm, 0cm, 0cm}, scale=0.23]{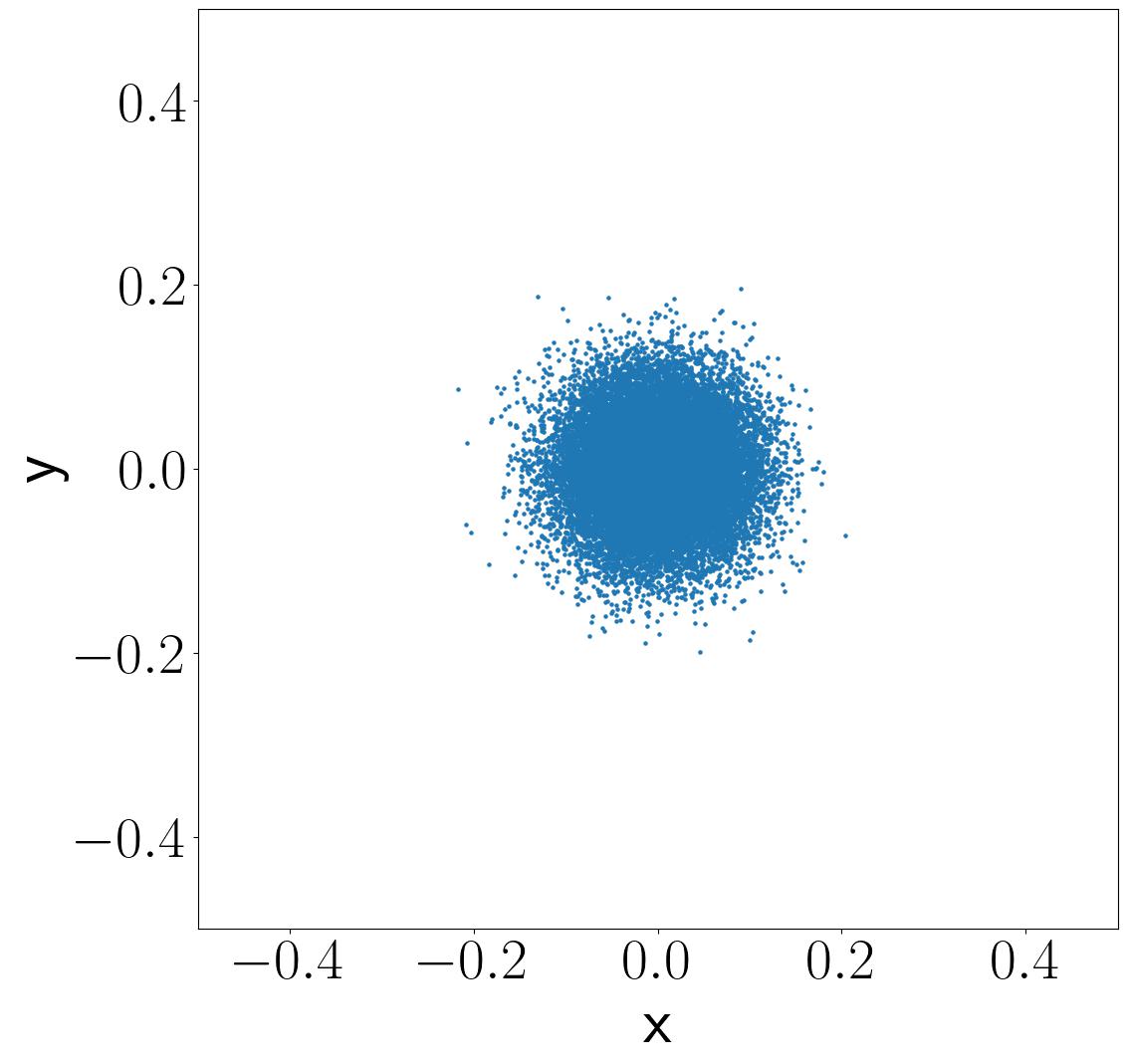}}
    \subfloat[$t = 0.05$]{
    \includegraphics[clip, trim={0cm, 0cm, 0cm, 0cm}, scale=0.23]{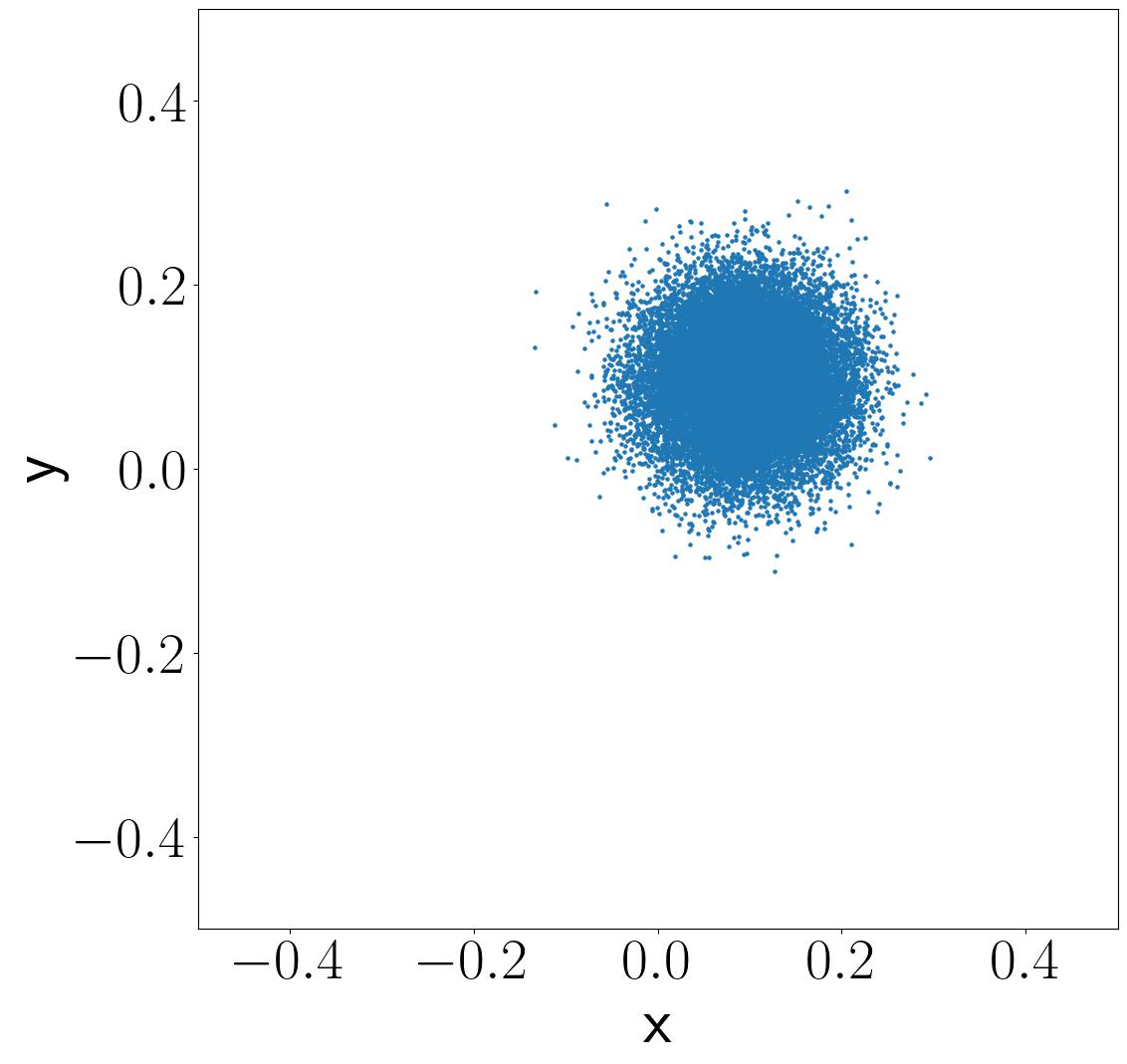}}
    
    \subfloat[$t = 0.1$]{
    \includegraphics[clip, trim={0cm, 0cm, 0cm, 0cm}, scale=0.23]{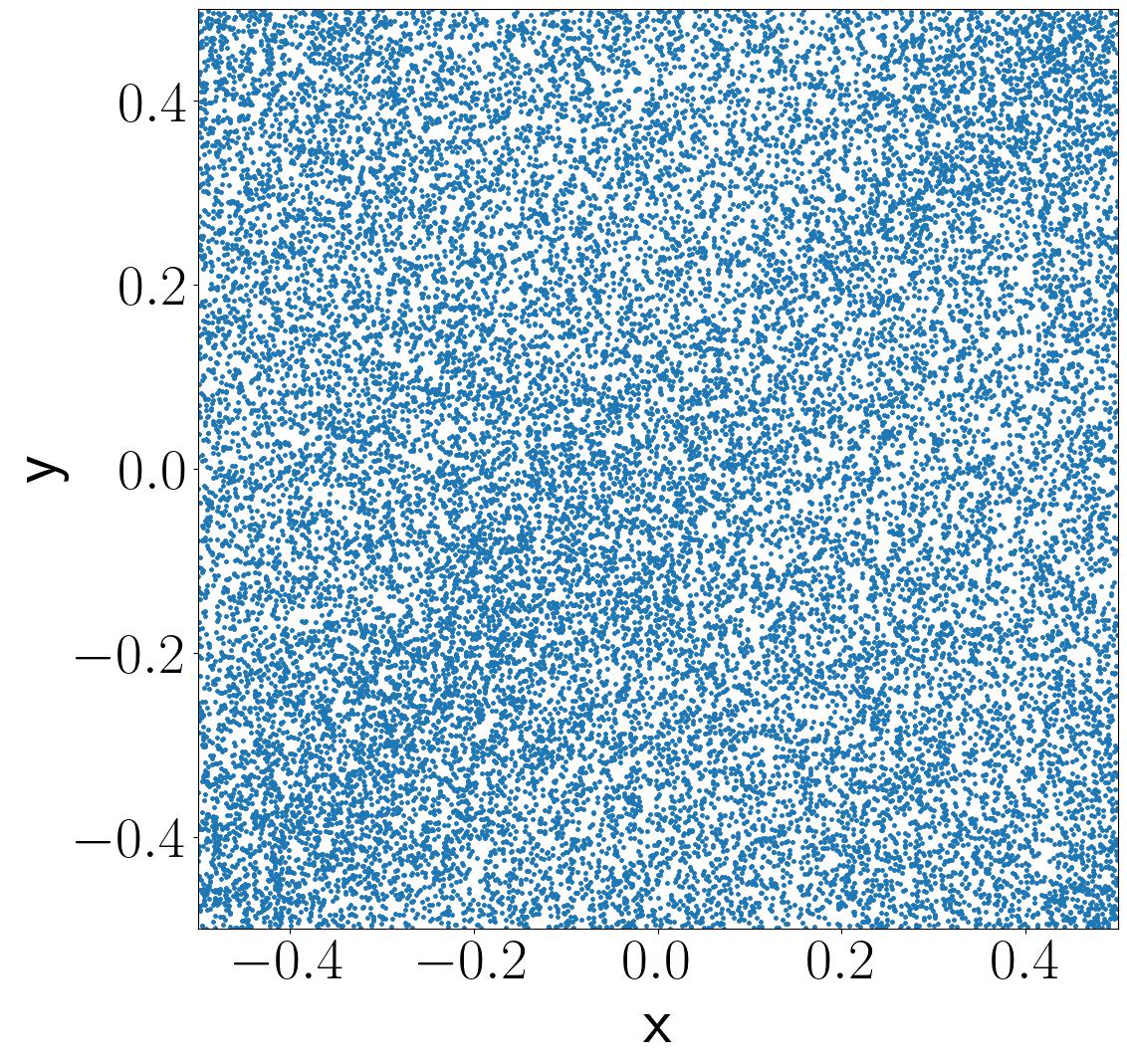}}
    \subfloat[$t = 0.5$]{
    \includegraphics[clip, trim={0cm, 0cm, 0cm, 0cm}, scale=0.23]{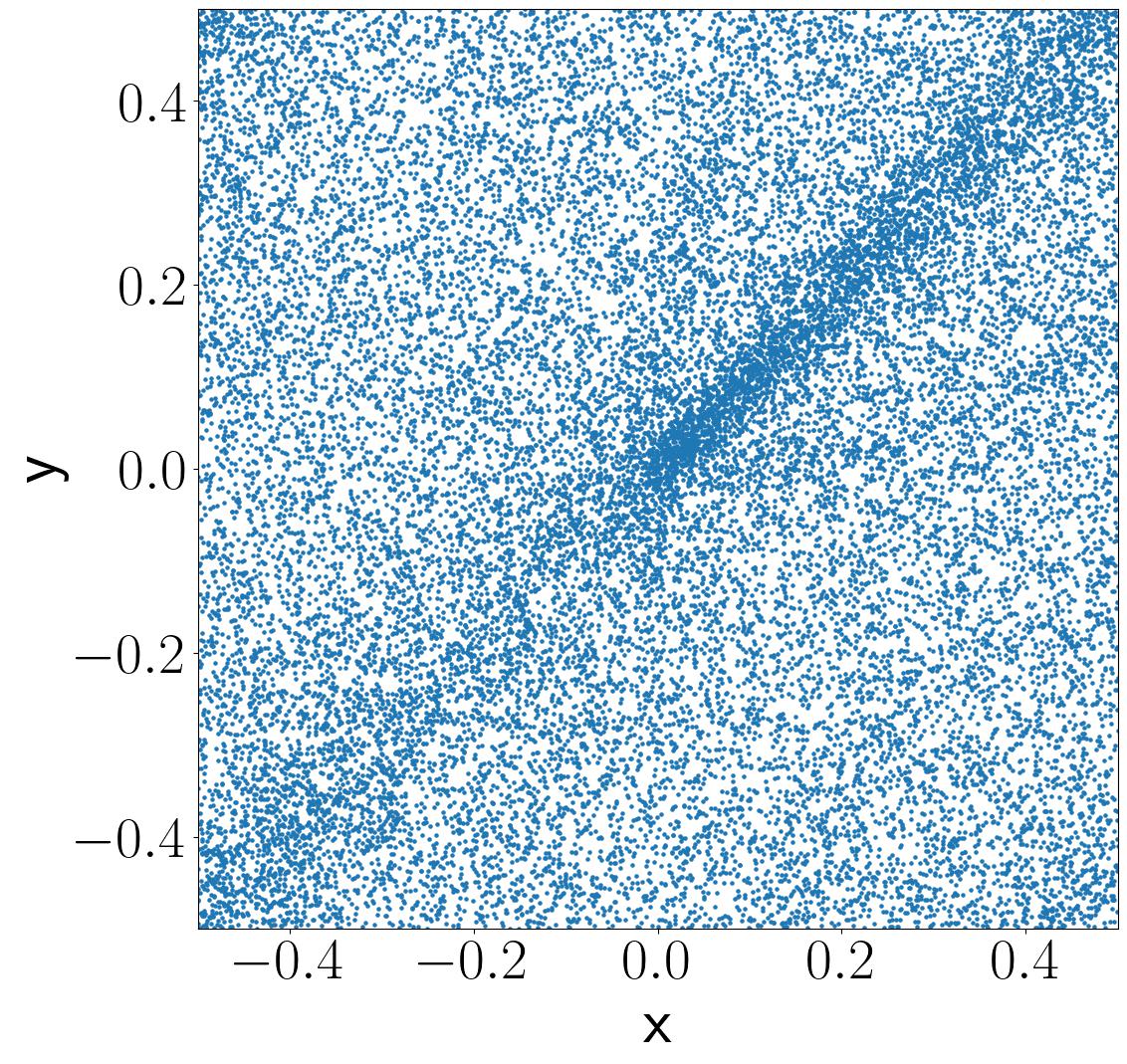}}
    \caption{A Maxwellian distribution of electrons and ions are placed on a periodic domain. The electrons are given a thermal velocity in addition to a drift velocity $\textbf{v}_{d} = (v_{d}^{(1)}, v_{d}^{(2)})^T$, where $v^{(1)} = v^{(2)} = c/100$, that is the speed is $\lVert \textbf{v}_{d} \rVert = \sqrt{2}c/100$. We see the particles escape the well only to fall back in later. The plot (b) at $t=0.05$ is their second such traversal. By $t=0.1$ the electrons have become quite dispersed throughout the domain, though the remnants of the original drift velocity can be found in a slight beam along the diagonal ($t=0.5)$.}
    \label{fig:moving-cloud-scatter-plots}
\end{figure}

\begin{figure}[!htb]
    \centering
    \subfloat[Bilinear Map]{
    \includegraphics[clip, trim={0cm, 0cm, 0cm, 0cm}, scale=0.23]{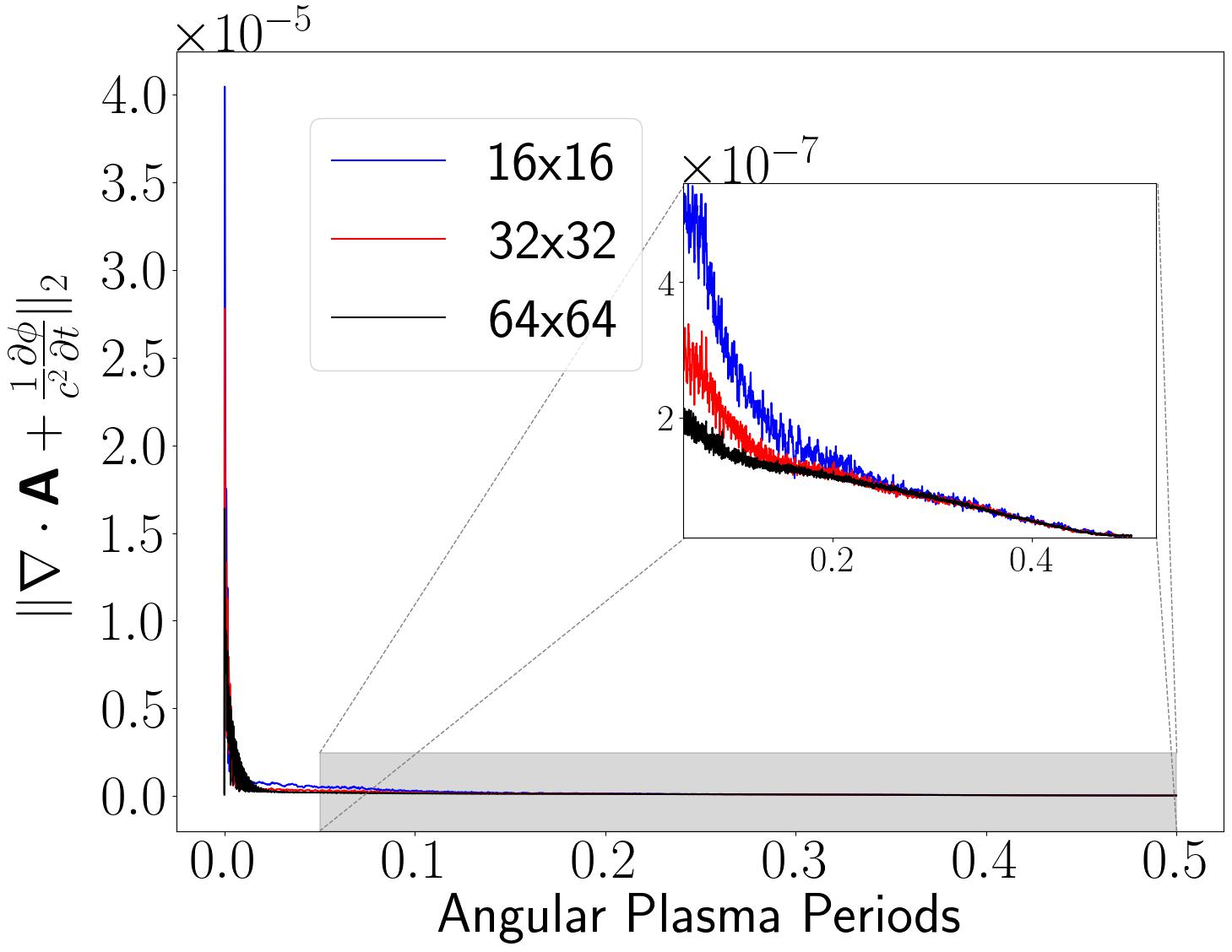}}    
    \subfloat[FFT Satisfying Continuity Equation]{
    \includegraphics[clip, trim={0cm, 0cm, 0cm, 0cm}, scale=0.23]{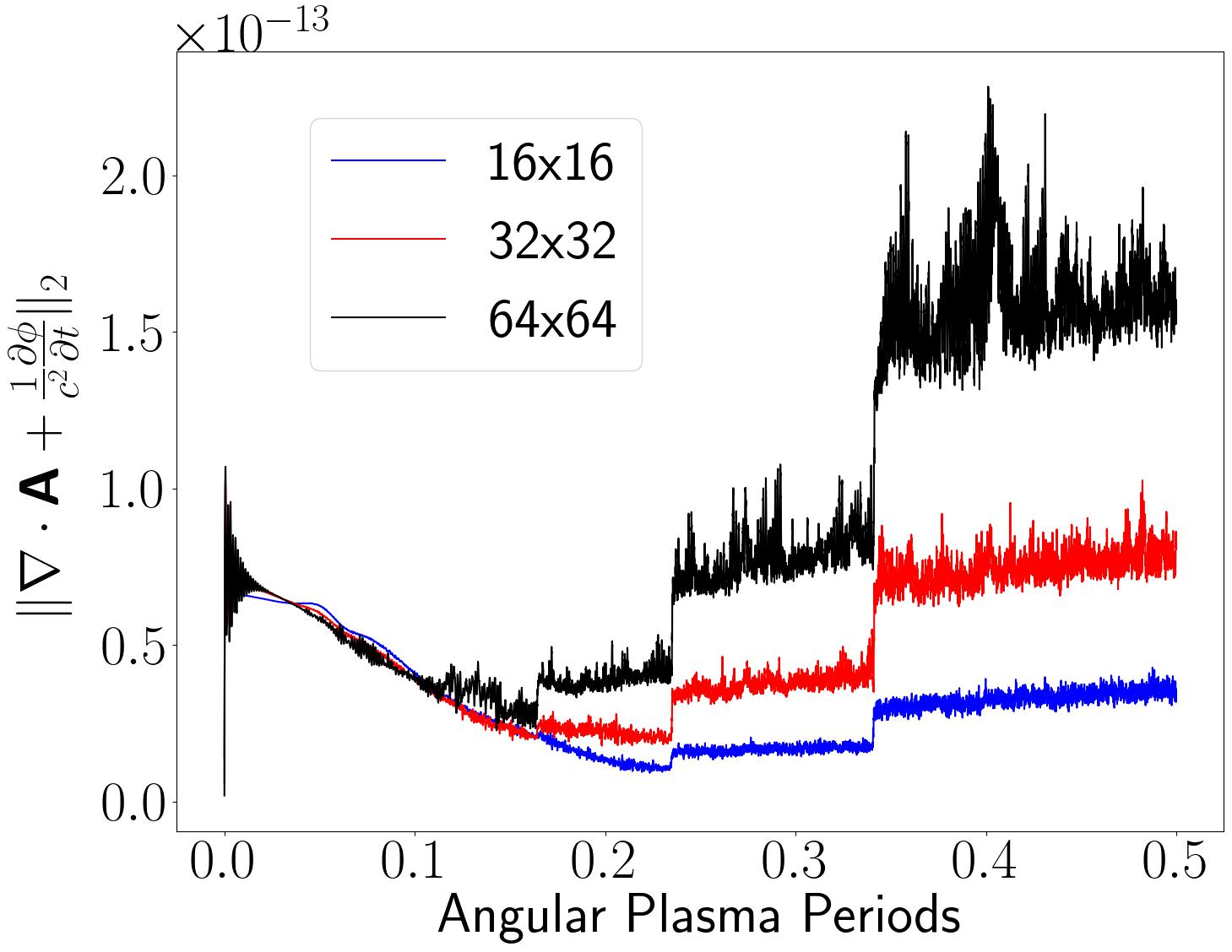}}
    
    \subfloat[FD6 Satisfying Continuity Equation]{
    \includegraphics[clip, trim={0cm, 0cm, 0cm, 0cm}, scale=0.23]{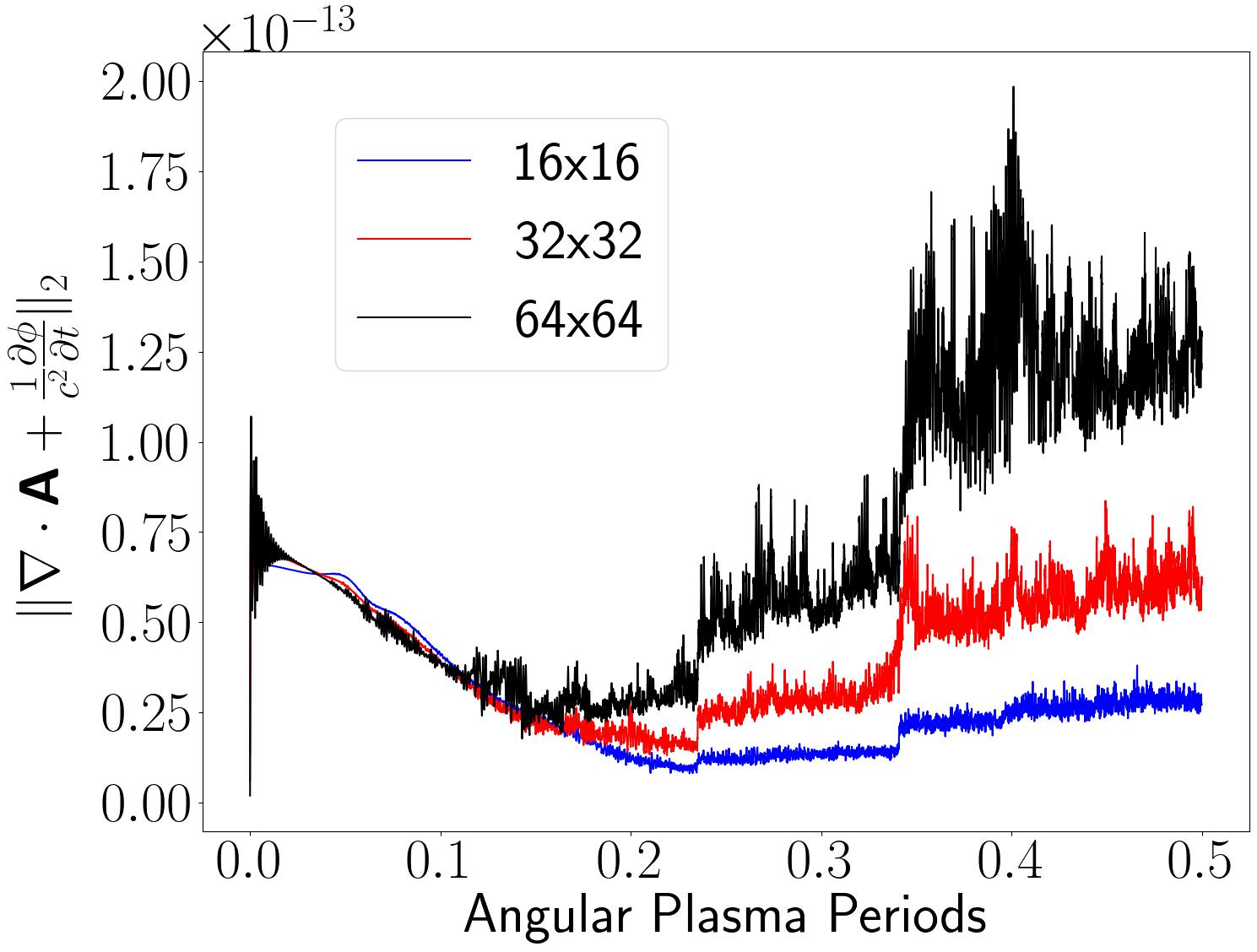}}    
    \subfloat[Gauge Correcting]{
    \includegraphics[clip, trim={0cm, 0cm, 0cm, 0cm}, scale=0.23]{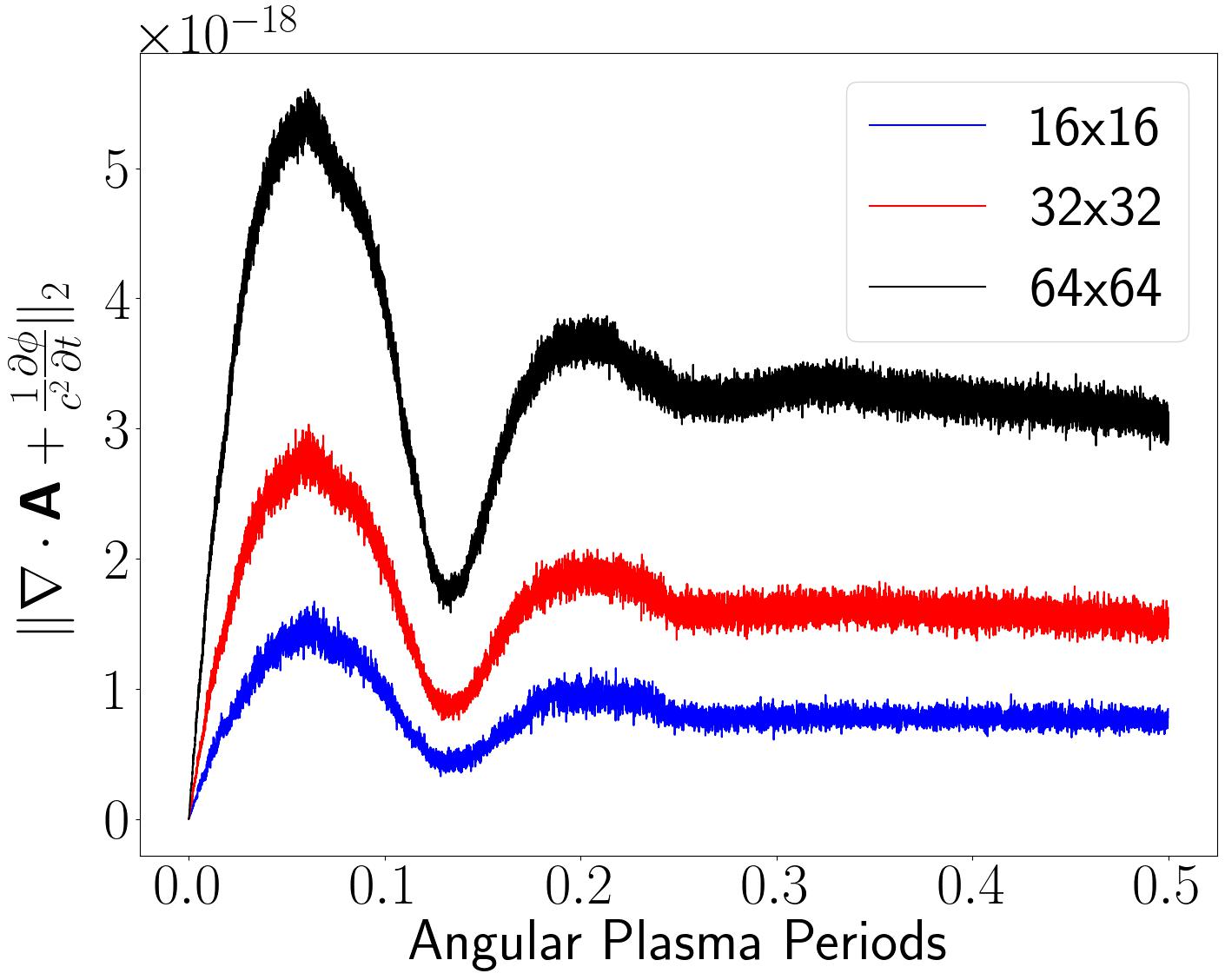}}
    \caption{The gauge error of a cloud of electrons drifting into and out of a potential well. Naively interpolated such that the charge and current densities are not consistent, we see a significant gauge error. However, if the charge density is computed from the current density using the continuity equation and a high order method (eg FFT) to compute the divergence of $\textbf{J}$, we see a much better gauge error over time.}
    \label{fig:Cloud Gauge Error}
\end{figure}

%
%
\subsection{Summary}
\label{subsec:5 Summary}

In this section, we presented some numerical results to compare with the theoretical properties established in section \ref{sec:2 formulation}. We compared several of the proposed methods for enforcing the Lorenz gauge condition against a naive approach in which bilinear maps were used for both the charge and the current density. In our experiments, we considered two test problems defined on periodic domains, namely the relativistic Weibel instability and a drifting cloud of electrons. In the case of the Weibel instability, we found that the growth rate in the magnetic field produced by each of the methods demonstrated good agreement with the predicted growth rate obtained from linear response theory. The second problem we consider is designed to induce a gauge error. In each of the examples, we find that the proposed methods offer notable improvements in reducing the error in the gauge condition when compared to the naive approach.

%
%
\section{Conclusions}
\label{sec:6 Conclusion}

This paper extended the PIC method developed in our previous paper \cite{christlieb2024pic} by investigating new methods to enforce the Lorenz gauge condition. A key result of this paper is Theorem \ref{thm:consistency in time}, which established a connection between the semi-discrete Lorenz gauge condition \eqref{eq:semi-discrete Lorenz} and a corresponding semi-discrete continuity equation \eqref{eq:semi-discrete continuity}. It was also proved that the satisfaction of the gauge condition implies the satisfaction of Gauss' law \eqref{eq:Gauss-E} (Theorem \ref{thm:lorenz implies gauss}). Guided by these semi-discrete theorems, we proposed three methods to enforce the Lorenz gauge condition: two charge conserving maps and one gauge correction technique. With these techniques in hand, we investigated their capabilities in periodic domains. The first problem we considered was the relativistic Weibel instability, in which the methods could be compared against analytical theory. The second test problem simulated a moving cloud of electrons and was specifically designed to induce a gauge error. In both problems, the charge conserving methods yielded a substantial improvement in reducing the error in the gauge condition when compared against the naive implementation.

The methods presented in this work can be further developed in a number of ways. First, we plan to extend these methods to the setting of a bounded domain. In more realistic applications, particles can be injected into a domain or emitted from a hot surface. Particles can also be absorbed or ``stick" to the surface of a conducting material, which will require a careful treatment of the boundary conditions. Additionally, the algorithms presented in this work rely on Cartesian grids, but these techniques should be further extended to handle geometry in several ways, including curvilinear coordinate transformations or even finite-element discretizations. As discussed earlier, we are especially interested in combining the map introduced in section \ref{subsec:3-charge-conserving-boundary-integral-solution} with a boundary integral formulation such as \cite{cheng2017asymptotic}. A more obvious extension of these methods is the extension to higher-order accuracy in time through either a time staggered formulation or an implicit PIC method. The latter is a non-trivial extension, but offers a number of benefits including symplecticity and the ability to take larger time steps in the particle advance. This last feature will be relevant for problems where the ion dynamics become important. It would also be interesting to introduce collision operators into the formulation. A recent paper \cite{bailo2024collisional} showed that it is possible to include grazing collisions from the Landau collision operator in PIC methods in a way that satisfies important collision invariants and guarantees the dissipation of a regularized entropy functional.
%
%
\section{Acknowledgements}

Some of the simulations presented in this work were supported by computational resources made available through the Institute for Cyber-Enabled Research at Michigan State University. The authors would like to thank both the Air Force Office of Scientific Research, the National Science Foundation, and the Department of Energy for their support though grants FA9550-19-1-0281,  FA9550-17-1-0394,  DMS-1912183, and DE-SC0023164.

\printbibliography

\end{document}